\relax
\documentclass[letterpaper]{article} 
\usepackage{aaai20}  
\usepackage{times}  
\usepackage{helvet} 
\usepackage{courier}  
\usepackage[hyphens]{url}  
\usepackage{graphicx} 
\urlstyle{rm} 
\usepackage{graphicx}  
\usepackage{color}

\frenchspacing  
\setlength{\pdfpagewidth}{8.5in}  
\setlength{\pdfpageheight}{11in}  
 \pdfinfo{
/Title (Active learning in the geometric block model)
/Author (Eli Chien, Antonia Maria Tulino, Jaime Llorca)
} 

\usepackage{amsthm}
\usepackage{amssymb}
\usepackage{amsmath}
\usepackage{booktabs}
\usepackage[algo2e,ruled]{algorithm2e}
\usepackage[margin=10pt]{subfig}
\usepackage{multirow}
\usepackage{siunitx}

\theoremstyle{definition}

\newtheorem{theorem}{Theorem}[section]
\newtheorem*{theorem*}{Theorem}

\newtheorem{definition}[theorem]{Definition}
\newtheorem{lemma}[theorem]{Lemma}
\newtheorem*{lemma_nonum}{Lemma}

\theoremstyle{remark}

\newtheorem{remark}{Remark}[section]

\newcommand\blfootnote[1]{%
  \begingroup
  \renewcommand\thefootnote{}\footnote{#1}%
  \addtocounter{footnote}{-1}%
  \endgroup
}

\nocopyright

\setcounter{secnumdepth}{2} 

%
\setlength\titlebox{2.5in} 
\title{Active learning in the geometric block model }
\author{\textbf{Eli Chien,\textsuperscript{\rm 1}\thanks{This work is done during Eli Chien's internship at Nokia Bell Labs.}}\textbf{Antonia Maria Tulino,\textsuperscript{\rm 2,3}}\textbf{Jaime Llorca\textsuperscript{\rm 2}}\\
\textsuperscript{\rm 1}ECE, University of Illinois Urbana-Champaign \\
\textsuperscript{\rm 2}Nokia Bell Labs \textsuperscript{\rm 3}DIETI, University of Naples Federico II\\
ichien3@illinois.edu, \{a.tulino, jaime.llorca\}@nokia.com
}
 \begin{document}

\maketitle

\begin{abstract}
The geometric block model is a recently proposed generative model for random graphs that is able to capture the inherent geometric properties of many community detection problems, providing more accurate characterizations of practical community structures compared with the popular stochastic block model.
Galhotra et al. recently proposed a motif-counting algorithm for unsupervised community detection in the geometric block model that is proved to be near-optimal. They also characterized the regimes of the model parameters for which the proposed algorithm can achieve exact recovery.
In this work,
we initiate the study of active learning in the geometric block model.
That is,
we are interested in the problem of exactly recovering the community structure of random graphs following the geometric block model under arbitrary model parameters,
by possibly querying the labels of a limited number of chosen nodes.
We propose two active learning algorithms that combine the idea of motif-counting with two different label query policies. 
Our main contribution is to show that sampling the labels of a vanishingly small fraction of nodes (sub-linear in the total number of nodes) is sufficient to achieve exact recovery 
in the regimes under which the state-of-the-art unsupervised method fails.
We validate the superior performance of our algorithms via numerical simulations on both real and synthetic datasets. \blfootnote{The conference version of this paper will appear in AAAI-20.}
\end{abstract}

\section{Introduction}

Community detection (or graph clustering) is one of the most important tasks in machine learning and data mining. In this problem, it is assumed that each node (or vertex) in a network (or graph) belongs to one of the underlying communities (or clusters), and that the topology of the network depends on these latent group memberships (or labels).
The goal is to recover the communities by partitioning the nodes into different classes that match the labels up to a permutation. This problem has many applications, such as clustering in social networks~\cite{fortunato2010community}, detecting protein complexes in protein interaction networks~\cite{chen2006detecting}, identifying customer interests in recommendation systems~\cite{pittir13328}, and performing image classification and segmentation~\cite{shi2000normalized}.

The stochastic block model (SBM) is a popular random graph model for community detection that generalizes the well-known Erd{\"o}s-Renyi model~\cite{holland1983stochastic,mossel2015consistency}.
In the SBM, the probability of having an edge between a pair of nodes depends only on the labels of the corresponding two nodes. In its simplest version, the SBM contains two communities of equal sizes, such that a pair of nodes from the same community are connected with probability $p$, and nodes from different communities are connected with probability $q$.
Prior works (see~\cite{abbe2017community} for an overview), have established the limits of unsupervised methods to achieve exact community detection in terms of the relative difference between $p$ and $q$.

I has then become an important question in practice to understand if we can still recover the correct community memberships in the regimes where unsupervised methods fail, by querying the labels of a small subset of nodes.
The process of actively querying the labels of a subset of nodes, referred to as {\em active learning}, is a very useful tool for many machine learning applications where the acquisition of labeled data is expensive and/or time consuming~\cite{cohn1994improving}.
In the active learning framework, we are allowed to query node labels up to a budget constraint in order to improve overall clustering accuracy. 
The authors of~\cite{gadde2016active} 
showed that a sub-linear number of queries is sufficient to achieve exact recovery below the limit (in terms of difference between $p$ and $q$) for unsupervised methods in the SBM,
and that the number of queries needed for exact recovery depends on how far we are below the limit -- hence providing a smooth trade-off between query complexity and clustering hardness in the SBM.


 While the SBM has gained a lot of popularity to benchmark the performance of clustering algorithms due to its ease of tractability, it fails to capture very important properties of real networks, such as ``transitivity'' (‘friends having common friends’)~\cite{holland1971transitivity,wasserman1994social}.
 Consider any three nodes in a graph, $x$, $y$, and $z$. Given the existence of edges between $x$ and $y$, and between $y$ and $z$, (partial) transitivity dictates that it is more likely than not that there also exists an edge between $x$ and $z$.
 However, under the SBM, a generalization of the Erd{\"o}s-Renyi graph model, edges are assumed to be independent of each other, conditioned on their respective node labels. Hence, in the SBM, the existence of edges $(x,y)$ and $(y,z)$ does not affect the probability of having edge $(x,z)$, failing to capture transitivity.

In order to account for the apparent transitivity of many real networks, the authors of~\cite{galhotra2018geometric} proposed a random graph community detection model termed geometric block model (GBM). The GBM combines elements of the SBM with the well studied {\em random geometric graph} (RGG) model that has found important practical applications e.g., in wireless networking~\cite{penrose2003random,gupta1999critical,devroye2011high,goel2005monotone}. In the GBM, the probability that and edge exists between two nodes depends, not only on the associated node labels, but also on their relative distance in the latent feature space. The authors in~\cite{galhotra2018geometric} experimentally validated the benefit of the GBM compared with the SBM to more accurately model real-world networks. 
In their follow up work~\cite{galhotra2018connectivity}, they proposed a state-of-the-art near-optimal motif-counting algorithm that can achieve exact recovery with high probability when the GBM parameters are above the limit of unsupervised methods.
Interestingly, as we illustrate in Section~\ref{limits}, such limit is much higher than in the SBM, showing that clustering in the GBM is fundamentally harder than in the SBM, and hence that in many practical settings, unsupervised methods will not be sufficient to accurately cluster real-world networks.

\subsection{Contributions}

Motivated by the advantage of the GBM to characterize of real-world networks and by the increased difficulty in clustering GBM based networks, in this work we initiate the study of active learning in the GBM.

We propose two active learning algorithms for the GBM that exactly recover the community memberships with high probability using a sub-linear number of queries, even when we are below the limit of the state-of-the-art unsupervised algorithm in~\cite{galhotra2018connectivity}. Similar to the result of~\cite{gadde2016active} on the SBM, our results offer a smooth trade-off between the query complexity and the hardness of clustering in the GBM.
Both algorithms exploit the idea of motif-counting to remove cross-cluster edges, while combining it with active learning in a different way.
The first algorithm combines motif-counting with the minimax optimal graph-based active learning algorithm $S^2$~\cite{dasarathy2015s2}.
The second algorithm exploits the result on the number of disjoint components in random geometric graphs, which involves the Stein-Chen inequality for Poisson approximations~\cite{han2008connectivity}. This is different from the result of~\cite{galhotra2018connectivity}, which analyzes the connectivity of random annular graphs. Interestingly, our analysis also slightly improves the limit for unsupervised methods derived  in~\cite{galhotra2018connectivity}. 

We test our algorithms extensively on both synthetic and real-world data. They improve the accuracy of the method in~\cite{galhotra2018geometric} from roughly $0.78$ to $0.92$ by querying no more than $4{\%}$ of the nodes in two real-world datasets. We remark that this accuracy is much higher than that of the spectral method, which can only achieve roughly $0.6$ accuracy on these same datasets. We also compare with the $S^2$ algorithm, which attains a slightly higher accuracy, but using at least $10$ times more queries.

\subsection{Related work}

\textbf{Active learning on arbitrary graphs--}
Active learning on graphs has attracted significant attention in the recent research literature. Most previous works do not assume any knowledge of the underlying statistical model, and hence their performance guarantees depend on the parameters of the graph into consideration~\cite{guillory2009label,gu2012towards,zhu2003combining,cesa2013active,dasarathy2015s2}. While these approaches are fairly general, they tend be too pessimistic in settings where prior knowledge about the statistical model is available.
In our work, we exploit the use of the minimax optimal graph-based active learning algorithm $S^2$~\cite{dasarathy2015s2} in combination with the prior knowledge of the underlying GBM.

\textbf{Modeling transitivity--}
Prior attempts to include transitivity in random graph models include the Euclidean random graph~\cite{sankararaman2018community}, where edges between nodes are randomly and independently drawn as a function of the distance between the corresponding nodes' feature random variables.
Differently from the GBM, clustering in this model requires, in addition to the graph, the values of the nodes' feature variables.
Another transitivity driven model is the Gaussian mixture block model~\cite{abbe2018graph}, where node features are modeled via a Gaussian random vector with mean depending on the associated node label and identical variance.
Two nodes are then connected by an edge if and only if their distance is smaller then some threshold. However, the authors of~\cite{abbe2018graph} only use this model to empirically validate their proposed unsupervised clustering method. No theoretical results have yet been proved for this model.

Finally, we note that while out of the scope of this paper, the use of hypergraphs provides another way to model transitivity, and that recent works have studied the generalization of the SBM in the hypergraph setting~\cite{chien2018community,chien2018minimax,ghoshdastidar2017consistency,ahn2016community,paul2018higher}.

\section{Notation and the geometric block model}

We use boldface upper case letters $\mathbf{A}$ to denote matrices and $[n]$ to denote the discrete set $\{1,2,...,n\}$. We use the standard asymptotic notation $f(n) = O(g(n))$ to denote that $\lim\limits_{n\rightarrow \infty}|\frac{f(n)}{g(n)}| \leq C$ for some constant $C\geq 0$. We also use $f(n) = o(g(n))$ to denote that $\lim\limits_{n\rightarrow \infty}|\frac{f(n)}{g(n)}| = 0$.

We start by introducing the definition of the random geometric graph (RGG) model, which appeared as an alternative to the popular Erd{\"o}s-Renyi graph.
\begin{definition}[RGG, $2$ dimensional torus case]\label{def:RGG}
    A random graph under $RGG(n,r)$ is a graph with $n$ nodes, where each node $u\in [n]$ is associated with a latent feature vector $X_u\sim Unif[0,1]$. Letting the distance between $X_u$ and $X_v$ be defined as 
    $d_{uv}= \min(|X_u-X_v|,1-|X_u-X_v|)$,
    then, nodes $u,v$ are connected by an edge under $RGG(n,r)$ if and only if $d_{uv}\leq r$.
\end{definition}

Let $r \triangleq \frac{\theta\log(n)}{n}$ for some constant $\theta$. It is known that a random graph under $RGG(n,r)$ is connected with high probability if and only if $\theta > 1$~\cite{penrose2003random}. Next, we provide the definition of the GBM, which depends on the RGG in a similar manner as the SBM depends on the Erd{\"o}s-Renyi graph.
\begin{definition}[GBM, $2$ dimensional torus case~\cite{galhotra2018geometric,galhotra2018connectivity}]\label{def:GBM}
A random graph under GBM$(n,\sigma,\theta_1,\theta_2)$ is a graph $G = (V,E)$ such that $V = [n]$ can be partitioned into two equal size components $V_1$ and $V_2$ determined by the label assignment $\sigma$. Specifically, $\sigma(i) = j$ if and only if $i\in V_j,\;\forall i\in[n],j=1,2$. Each node $u\in V$ is associated with a feature vector $X_u\sim Unif[0,1]$ independently from each other. Letting the distance between $X_u$ and $X_v$ be defined as 
$d_{uv}= \min(|X_u-X_v|,1-|X_u-X_v|)$, 
then, $(u,v)\in E$ if and only if
    $
        d_{uv}\leq \left(\theta_1\mathbf{1}\{\sigma(u) = \sigma(v)\}+\theta_2\mathbf{1}\{\sigma(u) \neq \sigma(v)\}\right)\frac{\log(n)}{n},
    $
    where $\theta_1 \geq \theta_2$ are constants independent of $n$.
\end{definition}

\begin{remark}
\label{gbmandrgg}
Note that each cluster in GBM$(n,\sigma,\theta_1,\theta_2)$ can be seen as a $RGG(n/2,\theta_1 \frac{\log(n)}{n})$.
\end{remark}

\begin{remark}
\label{scaling}
Note that we focus on the $\frac{\log(n)}{n}$ scaling regime as this is the critical regime for unsupervised exact recovery. As shown in~\cite{galhotra2018geometric,galhotra2018connectivity}, if $\theta_1-\theta_2<0.5$ or $\theta_1 < 1$, then no unsupervised method can correctly recover the community memberships with high probability.
Thus, in the rest of the paper focus on the most relevant setting of Definition~\ref{def:GBM}.
\end{remark}

Note that in the GBM, a pair of nodes from the same community are connected with probability $\frac{2\theta_1\log(n)}{n}$, and nodes from different communities are connected with probability $\frac{2\theta_2\log(n)}{n}$. We refer to these two probabilities as the marginal distributions of the GBM. Similarly,
the marginal distributions of the SBM are $\frac{a\log(n)}{n},\frac{b\log(n)}{n}$.

\subsection{Limits of unsupervised learning in the GBM and in the SBM}
\label{limits}

In this section, we compare the limits of unsupervised clustering on SBM and GBM by
setting the marginal distributions of both models to be the same, and show how clustering in the GBM is fundamentally harder than in the SBM.

We first focus on the GBM. In order to achieve exact recovery, the algorithm of~\cite{galhotra2018connectivity} requires the parameters of the GBM to satisfy certain sophisticated constraints. Due to space limitations, we only list Table~\ref{tbl:abpair} for some examples of GBM parameter values that satisfy such constraints. The complete description of the corresponding theorem is stated in the Supplemental material.
\begin{table}[ht]
\centering
\begin{tabular}[t]{cccccc}
\toprule
 $\theta_2$ & 1 & 2 & 3 & 4 & 5\\
\midrule
  min $\theta_1$ & 8.96 & 12.63 & 15.9 & 18.98 & 21.93\\
\bottomrule
\end{tabular}
\caption{The minimum $\theta_1$ for given $\theta_2$ such that the algorithm in~\cite{galhotra2018connectivity} would work.}\label{tbl:abpair}
\end{table}%

We now turn to the SBM. It is known that the state-of-the-art unsupervised method for the SBM requires $(\sqrt{a}-\sqrt{b})^2\geq 2$ to achieve exact recovery. We set $b = 2\theta_2$ and $a = 2\theta_1$ to make sure the marginal distributions are the same as in the GBM.
\begin{table}[ht]
\centering
\begin{tabular}[t]{cccccc}
\toprule
 $\frac{b}{2}$ & 1 & 2 & 3 & 4 & 5\\
\midrule
  min $\frac{a}{2}$ & 4 & 5.83 & 7.46 & 9 & 10.47\\
\bottomrule
\end{tabular}
\caption{The minimum $a$ for given $b$ such that the best unsupervised method for SBM would work.}\label{tbl:abpair2}
\end{table}%

From Table~\ref{tbl:abpair} and \ref{tbl:abpair2}, we can observe that 
exact recovery under the GBM
requires much denser connections within clusters than for the case of the SBM,
implying that clustering under the GBM is much harder than under the SBM.
This also means that many networks in practice, which are shown to follow the GBM more closely than the SBM~\cite{galhotra2018geometric}, will likely fall in the regimes where unsupervised methods cannot achieve exact recovery, further motivating the importance of active learning for community detection in real-world networks that exhibit transitivity.




\section{Active learning algorithms in the GBM}


In what follows, we present two active learning algorithms for the GBM, 
whose pseudocode is given described in Algrothm~\ref{alg:ALonGBM} and~\ref{alg:alter_ALonGBM}.
Both algorithms are composed of two phases: a first unsupervised phase that builds on the motif-counting technique of ~\cite{galhotra2018connectivity} to remove cross-cluster edges, and a second phase that queries a subset of node labels until recovering the underlying clusters.

Phase 1 of Algorithm~\ref{alg:ALonGBM} removes as many cross-cluster edges as possible while preserving intra-cluster connectivity with high probability. During Phase 2, the $S^2$ algorithm is used to identify the remaining cross-cluster edges \footnote{For completeness, we include the $S^2$ algorithm in the Supplement}. In contrast, Algorithm~\ref{alg:alter_ALonGBM} adopts a more aggressive edge removing policy during Phase 1. That is, it removes all cross-cluster edges with high probability. Note that in this case, intra-cluster connectivity may no be preserved. Nevertheless, during Phase 2, querying the label of one node for each disjoint component is sufficient to recover the underlying clusters.

One of the key elements of Phase 1 in the proposed algorithms is the motif-counting technique used in~\cite{galhotra2018connectivity}. Here, a motif is simply defined as a configuration of triplets (triangles) in the graph. For any edge $(u,v)$, we count the number of triangles that cover edge $(u,v)$. It is shown in~\cite{galhotra2018connectivity} that this triangle count is statistically different when $\sigma(u) = \sigma(v)$ compared to $\sigma(u)\neq \sigma(v)$. More importantly, this count is also related to the distance of node features $d_{uv}$. We will discuss this more precisely in Section~\ref{sec:Analysis}.

\begin{algorithm2e}
\caption{Motif-counting with $S^2$}\label{alg:ALonGBM}
\SetAlgoLined
\DontPrintSemicolon
\SetKwInput{Input}{Input}
\SetKwInOut{Output}{Output}
\SetKwInput{Mainalgo}{Main Algorithm}
\SetKwInput{Phaseone}{Phase 1}
\SetKwInput{Phasetwo}{Phase 2}
  \Input{Graph $G=(V,E)$, threshold $E_T$.
  }
  \Output{Estimated labels $\hat{\sigma}$
  }
  Duplicate $G$ by $G_r$\;
  \Phaseone{\;
  \For{$(u,v)\in E$}{
  Calculate the number of triangles $T^{uv}$ that cover the edge $(u,v)$ on $G$, remove $(u,v)$ from $G_r$ if $T^{uv}\leq nE_T$.\;
  }}
  \Phasetwo{
  Apply $S^2$ to $G_r$ to get $\hat{\sigma}$. Terminate when we find $2$ disjoint components.
  }
\end{algorithm2e}
\begin{algorithm2e}
\caption{Aggressive edge removing approach}\label{alg:alter_ALonGBM}
\SetAlgoLined
\DontPrintSemicolon
\SetKwInput{Input}{Input}
\SetKwInOut{Output}{Output}
\SetKwInput{Mainalgo}{Main Algorithm}
\SetKwInput{Phaseone}{Phase 1}
\SetKwInput{Phasetwo}{Phase 2}
  \Input{Graph $G=(V,E)$, parameter $t_1$.
  }
  \Output{Estimated labels $\hat{\sigma}$
  }
  Duplicate $G$ by $G_r$\;
  \Phaseone{\;
  \For{$(u,v)\in E$}{
  Calculate the number of triangles $T^{uv}$ that cover the edge $(u,v)$ on $G$, remove $(u,v)$ from $G_r$ if $T^{uv}\leq (2\theta_2+t_1)\log(n)$.\;
  }}
  \Phasetwo{\;
  Query one node for each disjoint components in $G_r$ and assign labels according to queried nodes for each disjoint components.
  }
\end{algorithm2e}

In the following section, we show that under the assumption that $\theta_1 \geq 2 \theta_2$\footnote{Note that the condition $\theta_1 \geq 2 \theta_2$ is stronger than our model assumption $\theta_1 \geq \theta_2$} and $\theta_1 \geq 2$, both Algorithm~\ref{alg:ALonGBM} and Algorithm~\ref{alg:alter_ALonGBM} guarantee exact recovery with sub-linear query complexity.
However, note that if $\theta_1 <2$, the underlying clusters may already contain disconnected components and, consequently, Algorithm~\ref{alg:ALonGBM} may not be able to preserve intra-cluster connectivity, requiring additional queries to achieve exact recovery.
In this case, it is better to directly use Algorithm~\ref{alg:alter_ALonGBM} even if exact recovery with sub-linear query complexity can no longer be guaranteed.

Finally, in the numerical results of Section~\ref{sec:simulation}, we show that under the assumption of perfect knowledge of the underlying GBM, Algorithm~\ref{alg:alter_ALonGBM} has practically lower query complexity than Algorithm~\ref{alg:alter_ALonGBM}. However, when dealing with real datasets for which the parameters of the underlying GBM are not available, Algorithm~\ref{alg:ALonGBM} is shown to be more robust 
to the uncertainty of the GBM parameters.


\section{Analysis of algorithms}\label{sec:Analysis}

In this section, we provide theoretical guarantees for our algorithms, and sketch the associated proofs. Detailed proofs are deferred to the supplementary material. We first state the result for the triangle count distribution.
\begin{lemma}[Lemma 11 and Lemma 12 in~\cite{galhotra2018connectivity}]\label{lma:Binomials}
    Assume $\theta_1 \geq 2\theta_2$. Let $\mathbf{A}$ be the adjacency matrix of GBM$(n,\sigma,\theta_1,\theta_2)$. For any pair of nodes $u,v$ with $A_{uv} = 1$, let $d_{uv} = x \triangleq \phi\frac{\log(n)}{n}$ and let the count of the triangles that cover edge $(u,v)$ be $T^{uv}(x) \triangleq \left|\left\{z\in V:A_{uz} = A_{vz} = 1\right\}\right|$. If $\sigma(u)\neq \sigma(v)$, then $$T^{uv}(x)\sim Bin(n-2,2\theta_2\frac{\log(n)}{n}).$$ If $\sigma(u)= \sigma(v)$, then
    \begin{align*}
        & T^{uv}(x)\sim Bin(\frac{n}{2}-2,(2\theta_1-\phi)\frac{\log(n)}{n})\\
        & +\mathbf{1}\{\phi\leq 2\theta_2\}Bin(\frac{n}{2},(2\theta_2-\phi)\frac{\log(n)}{n}).
    \end{align*}
\end{lemma}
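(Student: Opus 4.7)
The plan is to condition on the features $X_u, X_v$ realizing $d_{uv}=x=\phi\log(n)/n$, and then express $T^{uv}(x)$ as a sum of independent Bernoulli indicators, one per node $z \in V\setminus\{u,v\}$. Since each $X_z$ is drawn independently and uniformly on $[0,1]$ (the torus), the probability that $z$ is a common neighbor of $u$ and $v$ equals the torus-length of the intersection of the two balls around $u$ and $v$ whose radii are determined by the pair of labels $(\sigma(z),\sigma(u))$ and $(\sigma(z),\sigma(v))$. All that remains is a one-dimensional geometric computation, followed by splitting the sum according to the community of $z$.

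For Case 1, $\sigma(u)\neq\sigma(v)$, consider a node $z$ with $\sigma(z)=\sigma(u)$. The relevant event is $X_z\in B(u,\theta_1\log(n)/n)\cap B(v,\theta_2\log(n)/n)$. Because $A_{uv}=1$ forces $\phi\leq\theta_2$, and because $\theta_1\geq 2\theta_2$, we have $d_{uv}+\theta_2\log(n)/n\leq\theta_1\log(n)/n$, so the small ball around $v$ sits inside the large ball around $u$. The intersection therefore coincides with $B(v,\theta_2\log(n)/n)$ and has length $2\theta_2\log(n)/n$. The symmetric calculation for $\sigma(z)=\sigma(v)$ gives the same length. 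Since the indicator probability is $2\theta_2\log(n)/n$ regardless of $\sigma(z)$ and the $X_z$'s are mutually independent, summing the $n-2$ indicators yields $\mathrm{Bin}(n-2,\,2\theta_2\log(n)/n)$.

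For Case 2, $\sigma(u)=\sigma(v)$, I would split the $n-2$ nodes into the $n/2-2$ in the same community as $u$ and the $n/2$ in the opposite community. For $\sigma(z)=\sigma(u)$ the relevant intersection is $B(u,\theta_1\log(n)/n)\cap B(v,\theta_1\log(n)/n)$, whose length on the torus is $(2\theta_1-\phi)\log(n)/n$ (this is always nonnegative since $A_{uv}=1$ in the same-community case gives $\phi\leq\theta_1$). For $\sigma(z)\neq\sigma(u)$ the intersection is $B(u,\theta_2\log(n)/n)\cap B(v,\theta_2\log(n)/n)$, which has length $(2\theta_2-\phi)\log(n)/n$ when $\phi\leq 2\theta_2$ and is empty otherwise, producing the indicator $\mathbf{1}\{\phi\leq 2\theta_2\}$. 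The two groups of indicators are independent (different nodes, independent features), so the total is the sum of the two independent binomials as claimed.

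The only real subtlety is verifying that every ball-intersection length in the regime $0\leq\phi\leq\theta_1$ (Case 2) or $0\leq\phi\leq\theta_2$ (Case 1) is correctly computed on the torus rather than on the line, and checking that the ranges of $\phi$ never force us into degenerate wrap-around cases (which is automatic because $r=\Theta(\log(n)/n)\ll 1/2$). I expect this torus-vs-line bookkeeping, together with justifying the containment $B(v,\theta_2\log(n)/n)\subseteq B(u,\theta_1\log(n)/n)$ from $\theta_1\geq 2\theta_2$ in Case 1, to be the only place where care is needed; once it is in hand, the binomial forms follow immediately from independence and uniformity of $\{X_z\}$.
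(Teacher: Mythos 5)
Your argument is correct and complete: conditioning on $X_u,X_v$ with $d_{uv}=\phi\frac{\log(n)}{n}$, the per-node indicator probabilities are exactly the arc-intersection lengths you compute, the containment $B(X_v,\theta_2\frac{\log(n)}{n})\subseteq B(X_u,\theta_1\frac{\log(n)}{n})$ in the cross-community case does follow from $\phi\le\theta_2$ together with $\theta_1\ge 2\theta_2$, and the node counts ($n-2$, versus $\frac{n}{2}-2$ plus $\frac{n}{2}$) and independence across $z$ give precisely the stated binomials. Note that the paper itself offers no proof of this lemma --- it is imported verbatim from \cite{galhotra2018connectivity} --- so there is no in-paper argument to compare against; your derivation is the natural direct one and matches what the cited source establishes.
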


Lemma~\ref{lma:Binomials} shows that indeed the triangle count is an informative metric to distinguish the cases of $\sigma(u) = \sigma(v)$ and $\sigma(u)\neq \sigma(v)$. It is also strongly related to the distance of node features $d_{uv}$. See Figure~\ref{fig:explain1} for the visualization.

\subsection{Analysis of Algorithm~\ref{alg:ALonGBM}}
    We begin by stating the theoretical guarantee of Algorithm~\ref{alg:ALonGBM} under the assumption that $\theta_1 \geq 2 \theta_2$,
    and  $\theta_1 \geq 2$.

\begin{theorem}\label{thm:ALonGBM}
Define
\scriptsize
    \begin{align}\label{req:eta}
            & t_1 = \inf\left\{t\geq 0:(2\theta_2+t)\log(\frac{2\theta_2+t}{2\theta_2})-t>1\right\},
    \end{align}
    \normalsize
    Under the assumption that  $\theta_1 > 2\theta_2 \geq 2$, set in  Algorithm~\ref{alg:ALonGBM}
    \scriptsize
    \begin{align}\label{req:eta0}
        & \eta = \inf\left\{t\geq 0:(\theta_1+\theta_2-2-t)\log(\frac{\theta_1+\theta_2-2-t}{\theta_1+\theta_2-2})+t>1\right\}\nonumber\\
        & E_T = (\theta_1+\theta_2-2-\eta)\frac{\log(n)}{n}.
    \end{align}
    \normalsize
    If $\theta_1-\theta_2-2-\eta>t_1$, then after \textbf{Phase 1} in Algorithm~\ref{alg:ALonGBM}, we already recover the communities up to a permutation with probability at least $1-o(1)$. If $t_1>\theta_1-\theta_2-2-\eta>0$, after \textbf{Phase 2},  with probability at least $1-o(1)$, Algorithm~\ref{alg:ALonGBM} will recover the communities up to a permutation with query complexity at most
    \begin{equation}
        O(n^{1-\epsilon}\log(n)^3+\log(n)),
        \label{complexity}
    \end{equation}
    where
    \scriptsize
    \begin{equation*}
        \epsilon = (\theta_1+\theta_2-2-\eta)\log(\frac{\theta_1+\theta_2-2-\eta}{2\theta_2})-(\theta_1-\theta_2-2-\eta).
    \end{equation*}
    \normalsize
    \end{theorem}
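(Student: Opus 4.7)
The plan is to show that Phase 1 (the motif-counting thresholding step) removes essentially all cross-cluster edges while preserving intra-cluster connectivity, and then to feed the resulting graph $G_r$ to $S^2$ in Phase 2 if Case 2 holds. Throughout I will write $\tau \triangleq \theta_1-\theta_2-2-\eta$, so that $nE_T=(2\theta_2+\tau)\log n$ and $\tau>0$ in both cases.

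First I would control the lower tail for intra-cluster edges with $\phi\leq 2$, i.e.\ those at distance at most $2\log(n)/n$. By Lemma~\ref{lma:Binomials}, $T^{uv}$ stochastically dominates a binomial with mean at least $(\theta_1+\theta_2-2)\log n$, so a Chernoff lower-tail bound yields
\begin{equation*}
\Pr\!\left(T^{uv}\leq (\theta_1+\theta_2-2-\eta)\log n\right)\leq n^{-I(\eta)},
\end{equation*}
where $I(\eta)=(\theta_1+\theta_2-2-\eta)\log\!\tfrac{\theta_1+\theta_2-2-\eta}{\theta_1+\theta_2-2}+\eta$ is exactly the exponent appearing in~\eqref{req:eta0}. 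Since $\eta$ is chosen so that $I(\eta)>1$, a union bound over the $O(n\log n)$ intra-cluster edges shows that every such short edge survives Phase 1 w.h.p. Restricted to a single cluster, the surviving edges contain a copy of $RGG(n/2,2\log(n)/n)$---here I use $\theta_1\geq 2$ so that all intra-cluster edges with $\phi\leq 2$ are actually present in $G$ to begin with---which by the Penrose threshold cited after Definition~\ref{def:RGG} is connected w.h.p. Hence each cluster remains connected in $G_r$.

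Second, a symmetric upper-tail Chernoff on $T^{uv}\sim\mathrm{Bin}(n-2,2\theta_2\log(n)/n)$ gives $\Pr(T^{uv}>(2\theta_2+\tau)\log n)\leq n^{-J(\tau)}$ with $J(\tau)=(2\theta_2+\tau)\log\!\tfrac{2\theta_2+\tau}{2\theta_2}-\tau$; this matches the $\epsilon$ in the theorem and the rate function in~\eqref{req:eta}. In Case 1, $\tau>t_1$ forces $J(\tau)>1$, so a union bound removes every cross-cluster edge w.h.p.; combined with intra-cluster connectivity, the two clusters are precisely the connected components of $G_r$ and Phase 1 already gives exact recovery.

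In Case 2 ($0<\tau<t_1$) one has $\epsilon=J(\tau)\in(0,1)$; the expected number of surviving cross-cluster edges is $O(n^{1-\epsilon}\log n)$, and a straightforward Markov/variance argument shows the same $\tilde O(n^{1-\epsilon})$ bound holds w.h.p. Thus $G_r$ consists of two connected intra-cluster subgraphs joined by a cut of size $\tilde O(n^{1-\epsilon})$, and invoking the $S^2$ query-complexity guarantee---whose query count scales as cut size times a polylogarithmic factor, plus an $O(\log n)$ overhead to first reach the cut---yields the claimed $O(n^{1-\epsilon}\log^3(n)+\log n)$ bound. The main obstacle is aligning the two Chernoff rate functions precisely with the implicit definitions of $\eta$, $t_1$, and $\epsilon$; a secondary subtlety is verifying that the surviving intra-cluster edges really contain an entire $RGG(n/2,2\log(n)/n)$ independent of the thresholding randomness so that the classical connectivity result applies.
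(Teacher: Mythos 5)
Your proposal is correct and follows essentially the same route as the paper: a lower-tail Chernoff bound at the worst-case distance $\phi=2$ (with rate matching the definition of $\eta$) plus a union bound to preserve all short intra-cluster edges and hence an $RGG(n/2,2\log(n)/n)$ inside each cluster, an upper-tail Chernoff bound on the cross-edge triangle count $Bin(n-2,2\theta_2\log(n)/n)$ giving either complete removal (when the exponent exceeds $1$, i.e.\ $\tau>t_1$) or a first-moment-plus-Markov bound of order $n^{1-\epsilon}$ on the surviving cut, and finally the $S^2$ guarantee of Theorem~\ref{thm:S2} with $\delta=1/n$. This is precisely the decomposition into Lemma~\ref{lma:eta_GBM} and Lemma~\ref{lma:crossedgesbound_GBM} used in the paper, so no further comparison is needed.
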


\begin{theorem}\label{thm:ALonGBM1}
Under the assumption that $2\theta_2 \leq 2$, $\theta_1 \geq 2$, setting
   \scriptsize
    \begin{align}
        & \eta = \inf\left\{t\geq 0:(2\theta_1-2-t)\log(\frac{2\theta_1-2-t}{2\theta_1-2})+t>2\right\} \nonumber\\
        & E_T = \frac{1}{2}(2\theta_1-2-\eta)\frac{\log(n)}{n},
    \end{align}
    \normalsize
the same theoretical guarantees for  Algorithm
~\ref{alg:ALonGBM}, stated in Theorem~\ref{thm:ALonGBM}, can be derived by redefining  $\epsilon$ in \eqref{complexity} as:
\scriptsize
$$\epsilon = (\frac{1}{2}(2\theta_1-2-\eta))\log(\frac{\frac{1}{2}(2\theta_1-2-\eta)}{2\theta_2})-(\frac{1}{2}(2\theta_1-2-\eta)-2\theta_2).$$
    \normalsize
\end{theorem}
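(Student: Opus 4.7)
The plan is to mirror the proof of Theorem~\ref{thm:ALonGBM} step by step, adjusting the distributional analysis of the triangle count $T^{uv}$ to the regime $2\theta_2\leq 2\leq\theta_1$. Specifically, I would separately bound the probabilities that Phase~1 (a) deletes a same-cluster edge needed for intra-cluster connectivity and (b) keeps a given cross-cluster edge, and then argue that Phase~2 resolves any residual cross-cluster edges using $O(\log^2 n)$ queries each via $S^2$.

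First I would revisit Lemma~\ref{lma:Binomials} in this regime. Since $2\theta_2\leq 2$, a same-cluster edge with $\phi\in(2\theta_2,2]$ falls into the single-binomial case, so $T^{uv}\sim Bin(n/2-2,(2\theta_1-\phi)\log(n)/n)$ has mean minimized at $\phi=2$, equal to $\tfrac{1}{2}(2\theta_1-2)\log(n)$. The threshold $nE_T=\tfrac{1}{2}(2\theta_1-2-\eta)\log(n)$ sits $\tfrac{\eta}{2}\log(n)$ below this minimum, and the Chernoff lower-tail estimate gives a per-edge failure probability bounded by $\exp\!\bigl(-\tfrac{1}{2}[(2\theta_1-2-\eta)\log\tfrac{2\theta_1-2-\eta}{2\theta_1-2}+\eta]\log n\bigr)$; the definition of $\eta$ (with the strict inequality $>2$) is precisely what makes this $n^{-(1+\delta)}$ for some $\delta>0$. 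Union-bounding over the $O(n\log n)$ same-cluster edges with $\phi\leq 2$ preserves every such edge w.h.p., and since $\phi=2$ is the critical connectivity radius for an $RGG(n/2,\,\cdot\,)$ (cf.\ Remark~\ref{gbmandrgg}), each cluster of $G_r$ remains connected w.h.p.

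Second, for a cross-cluster edge $T^{uv}\sim Bin(n-2,2\theta_2\log(n)/n)$, I would apply Chernoff's upper-tail bound to $P(T^{uv}>nE_T)$; substituting the new $E_T$ and simplifying $(1+\delta)\log(1+\delta)-\delta$ with $1+\delta=\tfrac{1}{2}(2\theta_1-2-\eta)/(2\theta_2)$ yields exactly the $\epsilon$ in the statement, so the per-edge survival probability is $n^{-\epsilon}$. There are $O(n\log n)$ cross-cluster edges initially, hence the count surviving Phase~1 is $O(n^{1-\epsilon}\log n)$ w.h.p.\ by a standard concentration argument. Phase~2 is then analyzed exactly as in Theorem~\ref{thm:ALonGBM}: when the residual count is zero (which occurs whenever the Chernoff gap exceeds $t_1$), two queries (one per component of $G_r$) suffice; otherwise $S^2$ identifies each residual cross-cluster edge with $O(\log^2 n)$ queries, giving the $O(n^{1-\epsilon}\log^3 n+\log n)$ bound in \eqref{complexity}.

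The principal technical obstacle, which also drives the change in constants, is that in this regime the ``bottleneck'' same-cluster edges lie in case~2 of Lemma~\ref{lma:Binomials} rather than case~1, so the concentration exponent is effectively halved. This is precisely why the defining inequality for $\eta$ switches from $>1$ in Theorem~\ref{thm:ALonGBM} to $>2$ here, and why a factor of $\tfrac{1}{2}$ appears in both $E_T$ and in the formula for $\epsilon$; the remaining pieces of the argument (RGG connectivity, union bounds, and the $S^2$ query-complexity analysis) transfer essentially verbatim from the proof of Theorem~\ref{thm:ALonGBM}.
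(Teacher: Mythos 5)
Your proposal is correct and follows essentially the same route the paper intends: the paper gives no separate proof of Theorem~\ref{thm:ALonGBM1}, stating only that it is proved analogously to Theorem~\ref{thm:ALonGBM}, and your argument is exactly that analogy carried out — adapting Lemma~\ref{lma:eta_GBM} to the single-binomial bottleneck at $\phi=2$ (which correctly explains the halved exponent, the factor $\tfrac{1}{2}$ in $E_T$, and the switch from $>1$ to $>2$ in the definition of $\eta$), adapting Lemma~\ref{lma:crossedgesbound_GBM} to obtain the stated $\epsilon$, and invoking Theorem~\ref{thm:S2} for Phase~2. The only cosmetic differences are that the paper bounds $|C|$ via a first-moment-plus-Markov argument yielding $\frac{\theta_2}{2}n^{1-\epsilon}\log(n)^2$ rather than a generic concentration claim, and the $\log(n)^3$ in the query bound arises as $|C|$ times an $O(\log n)$ per-edge cost from Theorem~\ref{thm:S2}; neither affects correctness.
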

\begin{remark}
Note that for any fixed $\theta_2$ such that
$\theta_1 \geq 2 \theta_2$ with $\theta_1 \geq2$, $1-\epsilon$ decays as $\theta_1$ grows. Thus, Algorithm~\ref{alg:ALonGBM} provides a smooth trade-off between clustering hardness and query complexity. Interestingly, we show that when $\theta_1-\theta_2-2-\eta>t_1$, we can achieve exact recovery without any queries. We numerically show that this result gives an improvement over the previously known bound for unsupervised methods given in~\cite{galhotra2018connectivity} for a wide range of $\theta_2$ (See Figure~\ref{fig:improvement}).
\end{remark}

\begin{figure}[!t]
  \centering
    \includegraphics[width=0.85\linewidth]{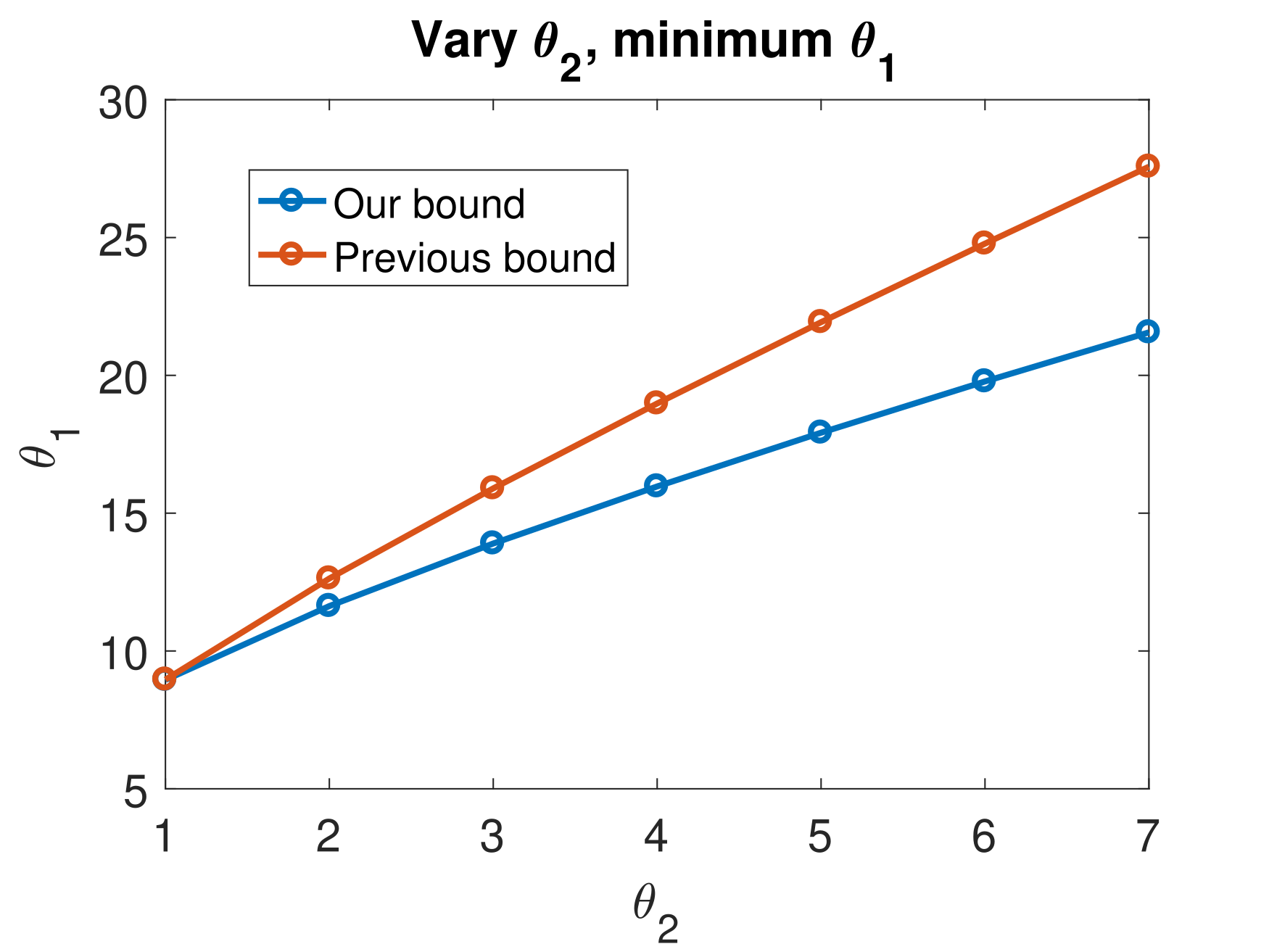}
  \caption{The minimum gap between $\theta_1$ and $\theta_2$ permitted by our Theorem~\ref{thm:ALonGBM} versus the state-of-the-art bound of~\cite{galhotra2018connectivity} in the unsupervised setting.}\label{fig:improvement}
\end{figure}





In the following, we focus on proving Theorem~\ref{thm:ALonGBM}, since Theorem~\ref{thm:ALonGBM1} can be proved analogously.
In order to prove Theorem~\ref{thm:ALonGBM}, we will use the theoretical guarantee of the $S^2$ algorithm and two technical lemmas.

\begin{theorem}[Simplified Theorem 3 in \cite{dasarathy2015s2}]\label{thm:S2}
    Let $C$ be the set of cross-cluster edges in graph $G$ with latent labels 
    $\mathbf{\sigma}$. Let $\partial C$ be the set of nodes associated with at least $1$ cross-cluster edge. Suppose that each cluster is connected and has diameter at most $D$. If the $S^2$ algorithm uses at least
    \scriptsize
    \begin{align*}
        \frac{\log(2/\delta)}{\log(2)} + \lceil\log_2(n)\rceil + (\min(|\partial C|,|C|)-1)(\lceil\log_2(2D+1)\rceil+1)
        \label{uffa}
    \end{align*}
    \normalsize
    queries, then with probability at least $1-\delta$ the $S^2$, the algorithm will recover the clusters exactly.
\end{theorem}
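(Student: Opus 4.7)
The plan is to verify the two interleaved subroutines that make up the $S^2$ algorithm separately and then add up their query budgets. Recall that $S^2$ alternates between (i) a \emph{random sampling} step that queries unlabeled nodes uniformly at random until producing a ``witness pair,'' i.e., two oppositely-labeled already-queried nodes that are still connected in the current residual graph, and (ii) a \emph{bisection} step that, given such a witness pair, binary-searches the shortest path between them, locates a single cross-cluster edge, removes it from the residual graph, and hands control back to (i). The algorithm terminates once random sampling exhausts without producing a new witness.

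First I would bound the cost of one bisection round. Given a witness pair $u,v$ that are oppositely labeled and connected in the residual graph, any $u$-$v$ path must contain at least one still-unidentified cross-cluster edge. Since each cluster has diameter at most $D$, the shortest $u$-$v$ path has length at most $2D+1$ (a within-cluster segment of length at most $D$, one cross-cluster edge, then a within-cluster segment of length at most $D$). Binary search along the labels of this shortest path, with each query halving the interval containing a label sign change, isolates a cross-cluster edge in at most $\lceil \log_2(2D+1)\rceil$ queries; the ``$+1$'' in the formula absorbs the query initiating each bisection round. Next I would bound the number of bisection rounds: each round identifies a fresh cross-cluster edge, giving at most $|C|$ rounds; equivalently, each round newly labels at least one boundary node (one endpoint of the identified cut edge), giving at most $|\partial C|$ rounds. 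The ``$-1$'' arises because once enough cuts have been removed to disconnect the two clusters in the residual graph, no further witness pair exists and $S^2$ terminates without invoking another bisection.

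Second, I would bound the cumulative cost of random sampling across the entire run. A random query is charged against the random budget only when it fails to immediately create a witness pair. Using a geometric stopping argument for the waiting time to the first useful random query (exploiting the assumption that each cluster is connected, so sufficiently many random hits eventually land a same-labeled connected pair against an oppositely-labeled predecessor), the expected total random cost is $O(\log n)$, and a Chernoff/union-type boost converts this into the high-probability bound $\lceil \log_2 n\rceil + \log(2/\delta)/\log 2$. Summing this with the bisection cost derived above yields exactly the query bound stated in the theorem.

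The main obstacle is the probabilistic analysis of the random-sampling cost and its amortization across rounds. One must argue that after each bisection removes a cut edge, the residual graph still consists of two clusters of diameter at most $D$ joined by the remaining cross-cluster edges, so the probability of producing a witness per random query does not degrade as the algorithm progresses. The bisection accounting, in contrast, is a clean combinatorial argument that follows directly from the cluster-diameter hypothesis and the definitions of $|C|$ and $|\partial C|$.
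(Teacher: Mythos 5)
The paper does not actually prove this statement: Theorem~\ref{thm:S2} is imported (in simplified form) from~\cite{dasarathy2015s2}, with the $\min(|\partial C|,|C|)$ refinement attributed to~\cite{chien2019hs} in the remark that follows it, so there is no in-paper proof to compare yours against. Judged on its own, your reconstruction gets the skeleton right --- random sampling to seed both classes, bisection along shortest-shortest paths, the $2D+1$ path-length bound from the diameter hypothesis, and the two ways of counting bisection rounds that yield $\min(|\partial C|,|C|)$ --- but your accounting of the two logarithmic terms does not hold up.

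In the actual argument the random-sampling budget is exactly $\log(2/\delta)/\log 2$: with two equal-size clusters, $k$ uniform queries miss one cluster with probability at most $2\cdot 2^{-k}$, and one solves $2\cdot 2^{-k}\le\delta$. There is no geometric waiting time, no $O(\log n)$ expected sampling cost, and no Chernoff boost; the sampling cost is independent of $n$. The $\lceil\log_2 n\rceil$ term you fold into sampling is instead the cost of the \emph{first} bisection round, whose shortest-shortest path is only bounded a priori by $n$; the remaining $\min(|\partial C|,|C|)-1$ rounds each cost $\lceil\log_2(2D+1)\rceil+1$ because once a cut edge is known the next witness path has length at most $2D+1$. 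Your explanation of the $-1$ (``early termination once the clusters are disconnected'') also fails: the residual graph cannot split into the two clusters until \emph{every} cross-cluster edge has been removed, so no round is saved at the end --- the $-1$ simply reflects the first round being charged separately at $\lceil\log_2 n\rceil$. Your decomposition reproduces the same arithmetic total, but the justifications for the $\lceil\log_2 n\rceil$ term and for the $-1$ would not survive being written out, so as a proof of the cited theorem the attempt has a genuine gap.
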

\begin{remark}
    We note that while the original analysis in~\cite{dasarathy2015s2} only uses the term $|\partial C|$ in the query complexity, the authors of~\cite{chien2019hs} slightly improve it by including the term $\min(|\partial C|,|C|)$, which better serves our analysis.
\end{remark}
\begin{lemma}\label{lma:eta_GBM}
    Assume $\theta_2 \geq 1$. Let
    \scriptsize
    \begin{equation*}
        \eta = \inf\left\{t\geq 0:(\theta_1+\theta_2-2-t)\log(\frac{\theta_1+\theta_2-2-t}{\theta_1+\theta_2-2})+t>1\right\}.
    \end{equation*}
    \normalsize
    Then, by choosing $E_T = (\theta_1+\theta_2-2-\eta)\frac{\log(n)}{n}$, {\bf Phase 1} of Algorithm~\ref{alg:ALonGBM} is guaranteed to generate a graph $G_r$ whose underlying communities are connected.
\end{lemma}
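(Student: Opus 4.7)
The plan is to reduce connectivity of each cluster in $G_r$ to the classical connectivity theorem for random geometric graphs. Specifically, I aim to exhibit a radius $r^{\star}=\phi^{\star}\log(n)/n$ with $\phi^{\star}>2$ such that, with probability $1-o(1)$: (a) every intra-cluster pair $(u,v)$ with $d_{uv}\le r^{\star}$ has $T^{uv}>nE_T$ and therefore survives Phase 1, and (b) the sub-RGG of radius $r^{\star}$ induced on each $V_i$ is connected. Since a connected spanning subgraph forces connectivity of the supergraph, combining (a) and (b) gives the lemma.

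For step (a), I condition on $d_{uv}=\phi\log(n)/n$ with $\phi\le 2\theta_2$. By Lemma~\ref{lma:Binomials}, $T^{uv}$ is a sum of independent Bernoullis with mean $\mu(\phi)=(\theta_1+\theta_2-\phi)\log(n)$. A standard Binomial Chernoff lower-tail bound then yields
\[
\Pr\!\left[T^{uv}\le nE_T\,\middle|\,d_{uv}\right]\;\le\;n^{-I(\phi)},\qquad I(\phi)=(A-\eta)\log\tfrac{A-\eta}{A+2-\phi}+(2+\eta-\phi),
\]
where $A=\theta_1+\theta_2-2$. The defining equation of $\eta$ in the statement encodes precisely the critical condition $I(2)=1$, and the strict monotonicity of $I(\phi)$ in $\phi$ shows that $I(\phi)>1$ strictly for every $\phi<2$. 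A union bound, integrating $n^{-I(\phi)}$ against the density of intra-cluster distances below $r^{\star}$ over all $\binom{n/2}{2}$ same-cluster pairs, yields $o(1)$ total failure probability. The second Binomial summand of Lemma~\ref{lma:Binomials} (present when $\phi\le 2\theta_2$, which is automatic because $\theta_2\ge 1$) only enlarges $\mu(\phi)$, hence strengthens the bound and can be discarded for free.

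For step (b), Remark~\ref{gbmandrgg} identifies each $V_i$ with an $RGG(n/2,\theta_1\log(n)/n)$, so its sub-RGG of radius $r^{\star}$ is distributed as $RGG(n/2,r^{\star})$, which is connected with high probability whenever $\phi^{\star}>2$ by the classical result cited after Definition~\ref{def:RGG}. \textbf{The main obstacle} is the simultaneous calibration of $\phi^{\star}$: step (a) naturally yields $I(\phi^{\star})>1$ only for $\phi^{\star}<2$, whereas step (b) needs $\phi^{\star}>2$, so a single global threshold does not suffice. Closing this gap requires a finer analysis of the tail near $\phi=2$. The plan is to replace the crude union bound by a grid/box refinement: partition the torus into intervals of length $\Theta(\log(n)/n)$ and observe that each interval contains $\Theta(\log n)$ nodes per cluster w.h.p.\ (Chernoff for a Binomial), so that between any two adjacent intervals there are $\Theta(\log^{2}n)$ candidate intra-cluster edges, all of length $\le 2\log(n)/n$ and hence each with removal probability at most $n^{-1}$; ruling out the simultaneous removal of all such candidate edges and union-bounding over the $O(n/\log n)$ pairs of adjacent intervals then certifies that the retained edges connect adjacent boxes, yielding connectivity of each $V_i$ in $G_r$.
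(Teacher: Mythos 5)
Up to the point where you declare the ``main obstacle,'' your steps (a) and (b) are the paper's own argument: monotonicity of the triangle-count law of Lemma~\ref{lma:Binomials} in $\phi$, a Chernoff lower-tail bound whose exponent $I(\phi)$ is calibrated so that $I(2)=1$ by the very definition of $\eta$, a union bound over intra-cluster edges (the paper unions over the $O(n\log n)$ edges guaranteed by Lemma~\ref{lma:totedges} rather than integrating against the distance density, which is cosmetic), and then connectivity of the radius-$\frac{2\log n}{n}$ sub-RGG of each cluster. The boundary issue you flag is real: with $\eta$ defined exactly by $I(2)=1$ the per-edge failure probability at $\phi=2$ is only $n^{-1}$, so neither union bound is $o(1)$, and the critical connectivity radius for $n/2$ points is $\frac{\log(n/2)}{n/2}=(2-\frac{2\log 2}{\log n})\frac{\log n}{n}$, which the radius $\frac{2\log n}{n}$ exceeds only by a bounded additive constant in the $\frac{\log m + c}{m}$ parametrization, so the connectivity probability tends to a constant strictly less than $1$. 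But the paper does not close this with a new combinatorial device: it closes it by an $o(1)$ recalibration of $\eta$ (Remark~\ref{rmk:eta_detail} requires the exponent to exceed $1+\frac{2\log\log n}{\log n}$), and the same recalibration allows the working radius to be pushed to $(2+\Theta(\frac{\log\log n}{\log n}))\frac{\log n}{n}$. Since every downstream condition ($\theta_1-\theta_2-2-\eta>t_1$, the exponent $\epsilon$, etc.) is a strict inequality between constants, perturbing $\eta$ by $o(1)$ costs nothing.

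The genuine gap is in your proposed grid refinement, which cannot be calibrated consistently. For two nodes in adjacent intervals of length $\ell$ to be guaranteed within distance $\frac{2\log n}{n}$ you need $\ell\le\frac{\log n}{n}$; but an interval of that length contains no node of $V_i$ with probability $(1-\ell)^{n/2}\approx n^{-1/2}$, so among the $n/\log n$ intervals roughly $n^{1/2}/\log n$ are empty with high probability, and your premise that every interval holds $\Theta(\log n)$ nodes per cluster fails. Conversely, intervals long enough for all of them to be non-empty w.h.p.\ require $\ell\geq(2-o(1))\frac{\log n}{n}$, and then candidate pairs in adjacent intervals sit at distance up to $(4-o(1))\frac{\log n}{n}$, where the bound $n^{-I(\phi)}$ with $\phi\le 2$ no longer applies; forcing it to apply would require $I(4)\ge 1$, i.e.\ a strictly smaller $E_T$, which is a different (weaker) lemma and degrades $\epsilon$ in Theorem~\ref{thm:ALonGBM}. (Your appeal to ruling out the \emph{simultaneous} removal of all $\Theta(\log^2 n)$ candidates also implicitly assumes independence that the shared triangle-counting randomness does not provide, though that part is dispensable: one surviving candidate per interval pair, union-bounded over $O(n/\log n)$ pairs, would do.) In short, the box argument either breaks or proves a weaker statement; the intended and sufficient fix is the $o(1)$ adjustment of $\eta$ together with the observation that the critical radius for $n/2$ points is $2-o(1)$, not a refined spatial decomposition.
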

\begin{lemma}\label{lma:crossedgesbound_GBM}
  Assume $\theta_2 \geq 1$ and  set $E_T$ as in  Lemma~\ref{lma:eta_GBM}. Let $C$ be the set of cross-cluster edges in $G_r$. If $\theta_1-\theta_2-2-\eta>t_1,$, then with probability at least $1-o(1)$, we have $|C| = 0$. If $t_1>\theta_1-\theta_2-2-\eta>0,$,
    we have
    \begin{align*}
        |C| \leq \frac{\theta_2}{2}n^{1-\epsilon}(\log(n))^2
    \end{align*}
    with probability at least $1-o(1)$, where
    \begin{equation*}
        \epsilon = (\theta_1+\theta_2-2-\eta)\log(\frac{\theta_1+\theta_2-2-\eta}{2\theta_2})-(\theta_1-\theta_2-2-\eta)
    \end{equation*}
\end{lemma}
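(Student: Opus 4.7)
The plan is to reduce the lemma to a Chernoff tail estimate on the Binomial appearing in Lemma~\ref{lma:Binomials}, followed by linearity of expectation and Markov's inequality. A cross-cluster edge $(u,v)$ survives Phase~1 precisely when $T^{uv}>nE_T=(\theta_1+\theta_2-2-\eta)\log(n)$. By Lemma~\ref{lma:Binomials}, conditional on $A_{uv}=1$ and $\sigma(u)\neq\sigma(v)$, we have $T^{uv}\sim\mathrm{Bin}\bigl(n-2,\,2\theta_2\log(n)/n\bigr)$, with mean $\mu\approx 2\theta_2\log(n)$. Crucially, this conditional distribution does not depend on the realized distance $d_{uv}=x$, so the tail bound below will be uniform in $x$ and no integration over $x$ is needed.

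Next, I would apply the standard multiplicative Chernoff bound $P(X\ge(1+\delta)\mu)\le\exp\bigl(-\mu\,[(1+\delta)\log(1+\delta)-\delta]\bigr)$ with $1+\delta=\tau/(2\theta_2)$, where $\tau\triangleq\theta_1+\theta_2-2-\eta$. The hypothesis $\theta_1-\theta_2-2-\eta>0$ forces $\delta>0$, and a short algebraic simplification collapses the exponent to $-\log(n)\bigl[\tau\log(\tau/(2\theta_2))-(\tau-2\theta_2)\bigr]=-\epsilon\log(n)$. Hence each cross-cluster edge survives Phase~1 with probability at most $n^{-\epsilon}$ conditional on its existence.

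Using the marginal $P(A_{uv}=1\mid\sigma(u)\neq\sigma(v))=2\theta_2\log(n)/n$ together with the existence of $(n/2)^2$ cross-cluster pairs, linearity of expectation gives
\[
E[|C|]\;\le\;\frac{n^2}{4}\cdot\frac{2\theta_2\log(n)}{n}\cdot n^{-\epsilon}\;=\;\frac{\theta_2}{2}\,n^{1-\epsilon}\log(n).
\]
Observe that $f(t)\triangleq(2\theta_2+t)\log\bigl((2\theta_2+t)/(2\theta_2)\bigr)-t$ satisfies $f(0)=0$ and $f'(t)=\log\bigl((2\theta_2+t)/(2\theta_2)\bigr)>0$ for $t>0$, and $f(\theta_1-\theta_2-2-\eta)=\epsilon$; hence $\theta_1-\theta_2-2-\eta>t_1$ is equivalent to $\epsilon>1$. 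In that regime $E[|C|]=o(1)$, so $P(|C|\ge 1)\le E[|C|]\to 0$, proving the first statement. In the second regime $\epsilon\in(0,1]$, and Markov's inequality at threshold $\tfrac{\theta_2}{2}n^{1-\epsilon}(\log n)^2$ yields the stated bound with failure probability at most $1/\log(n)=o(1)$. No deep obstacle is expected here: since we work only at the level of expectations, the dependence among surviving cross-edges is irrelevant, and the bulk of the effort is the bookkeeping of the Chernoff exponent and the slack factor of $\log n$ inserted in the Markov step to drive the failure probability to zero.
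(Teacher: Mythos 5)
Your proof is correct and follows essentially the same route as the paper's: an upper-tail Chernoff bound showing each surviving cross-cluster edge has conditional survival probability at most $n^{-\epsilon}$, linearity of expectation giving $\mathbb{E}\left[|C|\right]\leq\frac{\theta_2}{2}n^{1-\epsilon}\log(n)$, and Markov's inequality with a $\log(n)$ slack to get the high-probability bound. The only (minor, and arguably cleaner) deviation is in the first case: where the paper invokes the per-edge bound of Lemma~\ref{lma:simpleLma13} together with a union bound over the $O(n\log(n))$ cross-cluster edges, you observe that $\theta_1-\theta_2-2-\eta>t_1$ is equivalent to $\epsilon>1$ (since $\epsilon=f(\theta_1-\theta_2-2-\eta)$ for the increasing function $f$ defining $t_1$, with $f(t_1)=1$), so the same first-moment computation already forces $|C|=0$ with probability $1-o(1)$.
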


\begin{remark}
Note that while Lemma~\ref{lma:eta_GBM} characterizes the threshold in \textbf{Phase 1} of Algorithm~\ref{alg:ALonGBM} that guarantees removing the most cross-cluster edges while maintaining intra-cluster connectivity, Lemma~\ref{lma:crossedgesbound_GBM} provides a bound on the number of remaining cross-cluster edges. Such bound, together with the result stated in Theorem~\ref{thm:S2}, is one of the key ingredients in the evaluation of the query complexity bound of Algorithm~\ref{alg:ALonGBM}. 
A graphical interpretation of the parameters
$t_1$ and $\eta$, as well as of the key steps of the proof of Lemma~\ref{lma:crossedgesbound_GBM} is provided in Figures~\ref{fig:explain1} and \ref{fig:explain2}.
\end{remark}

We are now ready to provide the proof of Theorem~\ref{thm:ALonGBM}.
\begin{proof}
    The first half of Theorem~\ref{thm:ALonGBM} directly follows  from Lemma~\ref{lma:eta_GBM} and ~\ref{lma:crossedgesbound_GBM}. Hence, in the following, we focus on the case $t_1>\theta_1-\theta_2-2-\eta>0$. From Lemma~\ref{lma:eta_GBM}, we know that with probability at least $1-o(1)$ of the underlying clusters of graph $G_r$ (the graph returned by {\bf Phase 1}) are still connected among each other. Then, by Theorem~\ref{thm:S2}, we know that with probability at least $1-\delta$, using at most
    \scriptsize
    $$\frac{\log(2/\delta)}{\log(2)} + \lceil\log_2(n)\rceil + (\min(|\partial C|,|C|)-1)(\lceil\log_2(n+1)\rceil+1)$$
    \normalsize
    queries, we can recover the communities. Finally, by Lemma~\ref{lma:crossedgesbound_GBM}, we know that with probability at least $1-o(1)$, $\min(|\partial C|,|C|)\leq |C| \leq \frac{\theta_2}{2}n^{1-\epsilon}\log(n)^2$. Hence, by union bound over all error events, we know that with probability at least $1-o(1)$, we can recover the communities with at most
    $$O(n^{1-\epsilon}\log(n)^3+\log(n))$$ queries,
    by simply choosing $\delta = \frac{1}{n}$.
\end{proof}

\begin{figure*}[t]
  \centering
    \subfloat[Case for $\theta_1-\theta_2-2-\eta>t_1$\label{fig:explain1}]{\includegraphics[width=0.32\linewidth]{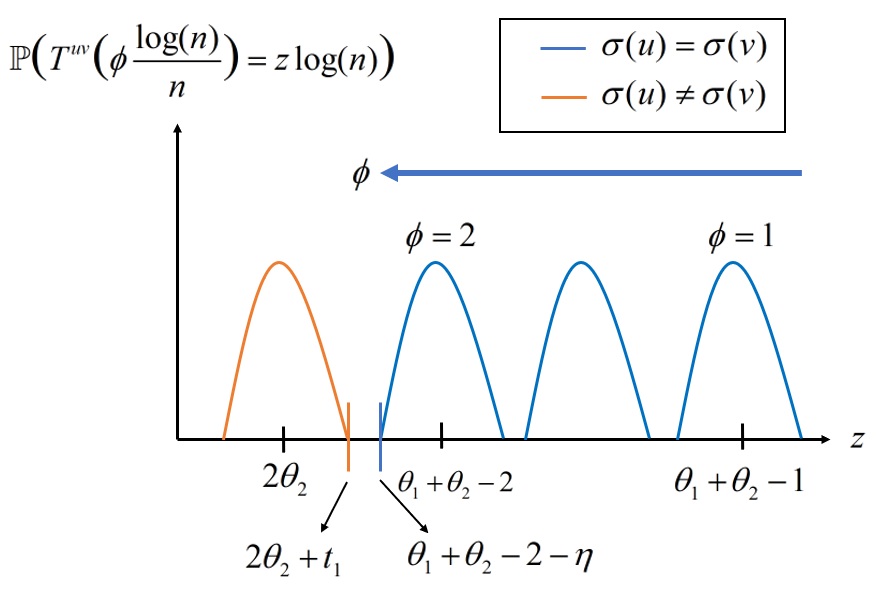}}
    \subfloat[Case for $\theta_1-\theta_2-2-\eta<t_1$. Key idea of Lemma~\ref{lma:crossedgesbound_GBM}\label{fig:explain2}]{\includegraphics[width=0.32\linewidth]{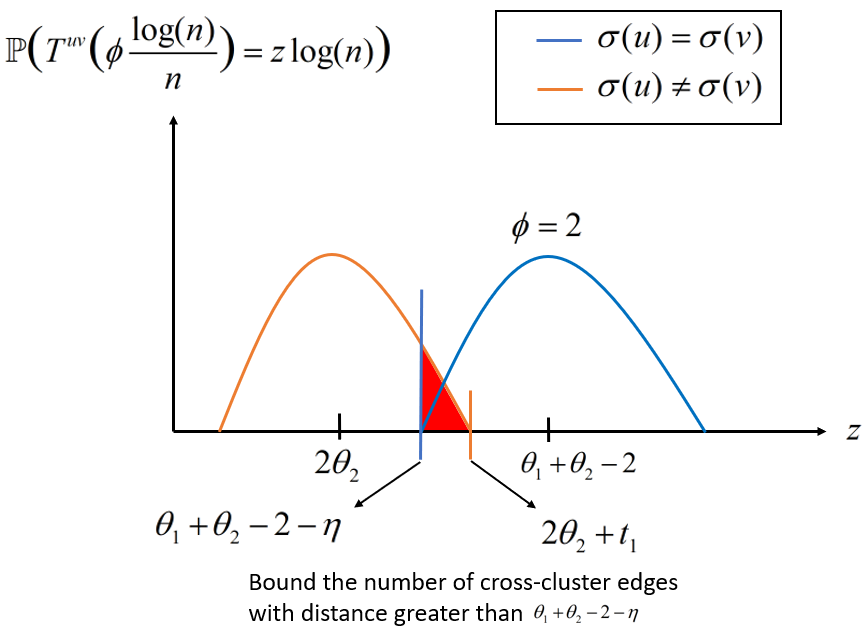}}
    \subfloat[Case for $\theta_1-\theta_2-2-\eta<t_1$. Key idea of Lemma~\ref{lma:lemmaR}\label{fig:explain3}]{\includegraphics[width=0.32\linewidth]{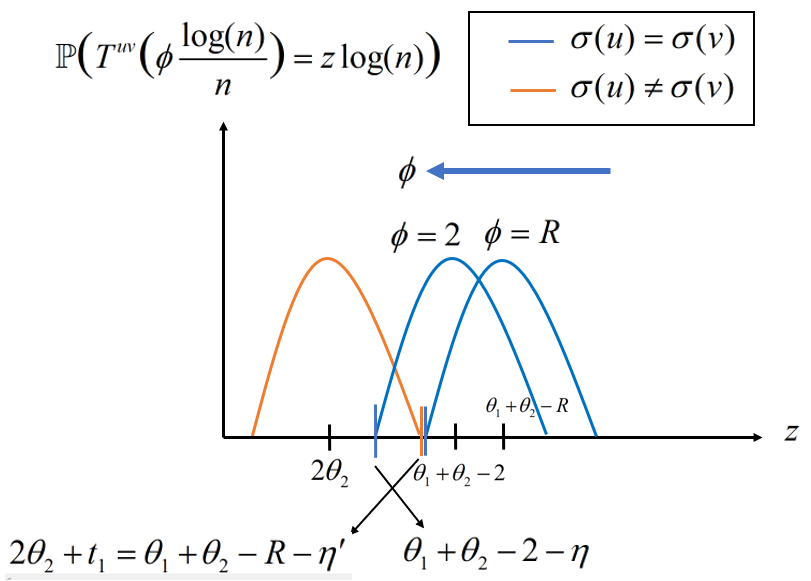}}
  \caption{
  Figure (a) illustrates Lemma~\ref{lma:Binomials} and the intuition behind parameters $t_1$ and $\eta$. Figure (b) illustrates the main idea behind the proof of Lemma~\ref{lma:crossedgesbound_GBM}. We use the error probability (red area) to compute the expected number of cross-cluster edges that are kept after \textbf{Phase 1} of Algorithm~\ref{alg:ALonGBM}. Then, we use Markov inequality to give the high probability bound for the number of cross-cluster edges in $G_r$. Figure (c) illustrate the idea of parameter $R$, which is chosen to be the largest number such that $T^{uv}(R\frac{\log(n)}{n})\geq 2\theta_2+t_1$ for intra-cluster edges (blue) with high probability. }\label{fig:explain}
\end{figure*}

\subsection{Analysis of Algorithm~\ref{alg:alter_ALonGBM}}
The next theorem provides the theoretical guarantee of Algorithm~\ref{alg:alter_ALonGBM} under the assumption that
$\theta_1 \geq 2 \theta_2$, and $\theta_2>1$.

\begin{theorem}\label{thm:QboundAlgo2}
    Assume $\theta_1 \geq 2 \theta_2$,  $\theta_2 \geq 1$, and $2\theta_2+t_1 > \theta_1 + \theta_2 - 2 -\eta$. With probability at least $1-o(1)$, Algorithm~\ref{alg:alter_ALonGBM} exactly recovers the underlying clusters with query complexity at most $$\frac{3}{2}n^{1-R/2}+2,$$ where
    \scriptsize
    \begin{align*}
        R = &\sup_{\min(\theta_1-\theta_2-t_1,2)>r>0} \Bigg\{ (2\theta_2+t_1)\log(\frac{2\theta_2+t_1}{\theta_1+\theta_2-r})\\
        &+(\theta_1+\theta_2-r-(2\theta_2+t_1)) > 1\Bigg\}.
    \end{align*}
    \normalsize
\end{theorem}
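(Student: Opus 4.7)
The plan is to decouple the analysis into two stages that mirror the two phases of Algorithm~\ref{alg:alter_ALonGBM}. Stage~1 verifies that Phase~1 eliminates every cross-cluster edge while preserving every sufficiently short intra-cluster edge, via Chernoff bounds driven by Lemma~\ref{lma:Binomials}. Stage~2 then bounds the number of connected components of the resulting graph inside each cluster by comparing it to a subcritical random geometric graph and invoking the Stein--Chen Poisson approximation of~\cite{han2008connectivity}.

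For Stage~1, by Lemma~\ref{lma:Binomials}, if $\sigma(u)\neq\sigma(v)$ then $T^{uv}\sim\mathrm{Bin}(n-2,2\theta_2\log(n)/n)$ with mean close to $2\theta_2\log(n)$, while Phase~1 keeps $(u,v)$ only when $T^{uv}>(2\theta_2+t_1)\log(n)$. A standard upper-tail Chernoff bound gives
\[
\Pr\bigl(T^{uv}\geq (2\theta_2+t_1)\log(n)\bigr)\leq \exp\!\Bigl(-\log(n)\bigl[(2\theta_2+t_1)\log\tfrac{2\theta_2+t_1}{2\theta_2}-t_1\bigr]\Bigr),
\]
and by the definition of $t_1$ in~\eqref{req:eta} the bracketed exponent exceeds $1$, so a union bound over the $O(n\log n)$ cross-cluster edges removes all of them with probability $1-o(1)$. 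For intra-cluster edges with $d_{uv}=\phi\log(n)/n$ and $\phi\leq 2\theta_2$ (which covers all short edges of interest, since $R\leq 2\leq 2\theta_2$ under the assumption $\theta_2\geq 1$), Lemma~\ref{lma:Binomials} gives a triangle count of mean approximately $(\theta_1+\theta_2-\phi)\log(n)$, and the lower-tail Chernoff bound yields
\[
\Pr\bigl(T^{uv}\leq (2\theta_2+t_1)\log(n)\bigr)\leq \exp\!\Bigl(-\log(n)\bigl[(2\theta_2+t_1)\log\tfrac{2\theta_2+t_1}{\theta_1+\theta_2-\phi}+(\theta_1+\theta_2-\phi)-(2\theta_2+t_1)\bigr]\Bigr),
\]
whose bracketed exponent is exactly the quantity whose supremum defines $R$. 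For any fixed $\phi<R$ this exponent exceeds $1$, so a union bound over all $O(n\log n)$ intra-cluster edges shows that, with probability $1-o(1)$, every intra-cluster edge of length at most $(R-o(1))\log(n)/n$ survives Phase~1.

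For Stage~2, combining the two conclusions, $G_r$ has no cross-cluster edges and, on each cluster of size $n/2$, contains a copy of $RGG(n/2,R'\log(n)/n)$ for some $R'=R-o(1)$. The constraint $r<2$ inside the supremum defining $R$ forces $R\leq 2$, placing this RGG at or below its connectivity threshold (cf.\ Definition~\ref{def:RGG} and the $\theta>1$ characterization cited there). In this subcritical regime, the Stein--Chen Poisson approximation of~\cite{han2008connectivity} shows that the number of connected components concentrates around the expected number of isolated vertices, which is of order $(n/2)^{1-R/2}$ per cluster; aggregating over the two clusters then yields, with probability $1-o(1)$, at most $\tfrac{3}{2}n^{1-R/2}+2$ components in $G_r$. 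Since Phase~2 queries one node per component and propagates its label to the whole component, the absence of cross-cluster edges guarantees exact recovery and the component count equals the query complexity.

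The principal obstacle is Stage~2: converting the distance threshold $R\log(n)/n$ into the clean quantitative bound $\tfrac{3}{2}n^{1-R/2}+2$. The factor $1/2$ in the exponent arises from rescaling the cluster size $n/2$ against the radius $R\log(n)/n$, and the isolated-vertex calculation itself is routine; what requires care is arguing that components of size $\geq 2$ contribute only a lower-order term and that the Poisson approximation of~\cite{han2008connectivity} yields constants tight enough to extract the leading factor $\tfrac{3}{2}$. Once the two Chernoff exponents are identified with the definitions of $t_1$ and $R$, everything else---the two Chernoff bounds and their union bounds---follows essentially automatically.
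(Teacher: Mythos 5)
Your proposal follows essentially the same route as the paper: Chernoff bounds (the content of Lemmas~\ref{lma:simpleLma13} and~\ref{lma:lemmaR}) to remove all cross-cluster edges and to preserve all intra-cluster edges shorter than $R\frac{\log(n)}{n}$, followed by a reduction to counting the disjoint components of $RGG(\frac{n}{2},R\frac{\log(n)}{n})$ via the Stein--Chen approximation of Lemma~\ref{lma:Poissonapprox} and the Poisson tail bound of Lemma~\ref{lma:Poisson_tail} with $y=\lambda/2$, which is exactly where the factor $\frac{3}{2}$ comes from. One small correction: the quantity being Poisson-approximated is the number of spacings exceeding the radius, with mean $\tilde{n}(1-\tau)^{\tilde{n}}\approx\frac{1}{2}n^{1-R/2}$, not the number of isolated vertices (which would be of order $n^{1-R}$); your exponent is right because your ``rescaling'' computation is in fact the spacing count, but the isolated-vertex framing and the separate treatment of components of size at least $2$ are not the mechanism actually needed.
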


Note that if $2\theta_2+t_1 < \theta_1 + \theta_2 - 2 -\eta$, since Algorithm~\ref{alg:alter_ALonGBM} sets the threshold for the triangle count to $2\theta_2+t_1$, it is immediate to note (see Figure~\ref{fig:explain1}) that all cross-cluster edges will be removed while preserving all intra-cluster edges whose distance $\phi\frac{\log(n)}{n}$ is less than $2\frac{\log(n)}{n}$. In order to proof Theorem~\ref{thm:QboundAlgo2}, we need the following lemmas.
\begin{lemma}\label{lma:lemmaR}
    Assume $\theta_1 \geq 2\theta_2$,  $\theta_2 \geq 1$ and $2\theta_2+t_1 > \theta_1 + \theta_2 - 2 -\eta$. All intra-cluster edges with distance less than $R$ will not be removed in $G_r$, where
    \scriptsize
    \begin{align*}
       R = &\sup_{\min(\theta_1-\theta_2-t_1,2)>r>0} \Bigg\{ (2\theta_2+t_1)\log(\frac{2\theta_2+t_1}{\theta_1+\theta_2-r})\\
       &+(\theta_1+\theta_2-r-(2\theta_2+t_1)) > 1\Bigg\}
    \end{align*}
    \normalsize
\end{lemma}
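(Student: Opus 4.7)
The plan is to reduce the claim to a large-deviation bound on a sum of independent Bernoullis supplied by Lemma~\ref{lma:Binomials}, and then match the resulting Chernoff rate with the defining inequality of $R$. Fix an intra-cluster pair $(u,v)$ with $\sigma(u)=\sigma(v)$ and condition on its latent distance $d_{uv}=\phi\,\log(n)/n$ for some $\phi\in(0,R)$. Because $\theta_2\ge 1$ and the constraint $\min(\theta_1-\theta_2-t_1,2)>r$ forces $R\le 2\le 2\theta_2$, the indicator in Lemma~\ref{lma:Binomials} is active, and $T^{uv}$ is conditionally a sum of two independent binomials with combined mean $\mu=(\theta_1+\theta_2-\phi)\log n - O(1)$. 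Moreover, $\phi<\theta_1-\theta_2-t_1$ guarantees $\mu>k:=(2\theta_2+t_1)\log n$, placing us squarely in the lower-tail regime.

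Next I would invoke the standard Chernoff/KL lower-tail bound for a sum of independent Bernoullis, $P(\mathrm{sum}\le k)\le \exp\{-(k\log(k/\mu)+\mu-k)\}$. Substituting the expressions for $k$ and $\mu$, the exponent simplifies to exactly $\log(n)\cdot f(\phi)$ with
\[
f(\phi):=(2\theta_2+t_1)\log\!\Big(\tfrac{2\theta_2+t_1}{\theta_1+\theta_2-\phi}\Big)+(\theta_1+\theta_2-\phi)-(2\theta_2+t_1),
\]
yielding the per-edge bound $P(T^{uv}\le k\mid d_{uv}=\phi\log(n)/n,\,\sigma(u)=\sigma(v))\le n^{-f(\phi)}$. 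A one-line differentiation $f'(\phi)=(2\theta_2+t_1)/(\theta_1+\theta_2-\phi)-1<0$ on the relevant range shows $f$ is strictly decreasing, so by the very definition of $R$ as the supremum of the strict inequality one has $f(\phi)>1$ for every $\phi<R$.

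Finally, I close with a union bound. There are $\sim n^2/4$ intra-cluster pairs, and the density of $d_{uv}$ on the torus, expressed in the variable $\phi$, equals $2\log(n)/n$, so
\[
P\big(\exists\text{ intra-cluster edge with }d_{uv}<R\tfrac{\log n}{n}\text{ that is removed}\big)\le \tfrac{n\log n}{2}\int_0^R n^{-f(\phi)}\,d\phi.
\]
The main obstacle is controlling this integral near the boundary $\phi\to R$, where $f(\phi)\to 1$ and the integrand is only $n^{-1}$. I would handle this by linearizing $f(\phi)\approx 1+|f'(R)|(R-\phi)$ using $f'(R)<0$ and applying a Laplace-style estimate, which confines the dominant mass to a window of width $O(1/\log n)$; equivalently, one proves the statement at any $R'<R$, where $\min_{\phi\le R'}f(\phi)\ge 1+\delta$ for some $\delta>0$ and the per-edge bound $n^{-1-\delta}$ produces a $o(1)$ total failure, then absorbs the vanishing gap into the overall $1-o(1)$ success probability of Theorem~\ref{thm:QboundAlgo2}.
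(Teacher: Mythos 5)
Your proposal is correct and follows essentially the same route as the paper: condition on $d_{uv}=\phi\log(n)/n$, invoke Lemma~\ref{lma:Binomials}, apply the Chernoff/KL lower-tail bound with the optimal tilt $\xi=-\log\bigl(\tfrac{2\theta_2+t_1}{\theta_1+\theta_2-\phi}\bigr)$ to get the exponent $f(\phi)$, use monotonicity of $f$ to cover all $\phi<R$, and finish with a union bound. The only cosmetic difference is that the paper avoids your boundary integral altogether by using the monotonicity of the removal probability to get a uniform per-edge bound of $\tfrac{1}{n(\log n)^2}$ (the exact definition of $R$ carries a $\tfrac{2\log\log n}{\log n}$ correction that is then dropped as $o(1)$, exactly analogous to Remark~\ref{rmk:eta_detail}) and then unions over the $O(n\log n)$ existing in-cluster edges from Lemma~\ref{lma:totedges}, which is the same "absorb the vanishing gap" device you propose with $R'<R$.
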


Figure~\ref{fig:explain3} provides a graphical illustration of $R$. The key idea is to find the largest $R$ such that all intra-cluster edges with distance smaller than $R$ will not be removed with high probability during \textbf{Phase 1} of Algorithm~\ref{alg:alter_ALonGBM}.

Next, we characterize the number of disjoint components created in each cluster by
\textbf{Phase 1} of Algorithm~\ref{alg:alter_ALonGBM}.
To this end (see Remark~\ref{gbmandrgg}), we resort to the following lemma.
\begin{lemma}[Modification of Theorem 8.1 in~\cite{han2008connectivity}]\label{lma:Poissonapprox}
 Given a random geometric graph $RGG(n,\tau)$ with $2\tau<1$,  let $\mathcal{C}_{n,\tau}$ be the probability mass  function of the number of  disjoint components $-1$ of $RGG(n,\tau)$, and let $\Pi_\lambda$ denote a Poisson distribution with parameter $\lambda$. Let $d_{TV}(\mu,\nu)\triangleq \frac{1}{2}\sum_{x=0}^{\infty}|\mu(x)-\nu(x)|$ be the total variation of the two  probability mass functions $\mu$ and $\nu$ on $\mathbb{N}$. We then have
    \begin{align*}
        d_{TV}(\mathcal{C}_{n,\tau},\Pi_{\lambda_n(\tau)}) \leq B_n(\tau),
    \end{align*}
    where
    \scriptsize
    \begin{align*}
        & \lambda_n(\tau) = n(1-\tau)^{n},\;B_n(\tau) = n(1-\tau)^{n} - (n-1)(1-\frac{\tau}{1-\tau})^{n}
    \end{align*}
    \normalsize
\end{lemma}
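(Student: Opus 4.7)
The plan is to apply the Stein-Chen method for Poisson approximation to a sum-of-indicators representation of the component count, modifying Theorem 8.1 of Han et al. (2008) to accommodate the one-dimensional toroidal setting of $RGG(n,\tau)$ with $2\tau < 1$.

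First, I would represent the component count via an indicator system. A natural choice is to attach to each point $X_u$ an indicator $\xi_u$ that flags $X_u$ as the clockwise right-endpoint of an inter-point arc of length at least $\tau$. The sum $W = \sum_{u=1}^n \xi_u$ counts the number of large gaps on the torus, which equals the number of disjoint components whenever $W \geq 1$, and corresponds to ``one component, zero large gaps'' when $W = 0$. Thus (components) $-1$ agrees with $W$ on $\{W \geq 1\}$ and differs by $1$ on the rare event $\{W = 0\}$; this discrepancy has probability of order $(1-\tau)^n$ and can be absorbed into $B_n(\tau)$. By rotation invariance of the uniform distribution on the circle and symmetry of the joint spacing distribution, each $p_u \triangleq \E[\xi_u]$ is equal, and a direct calculation using the probability that a fixed arc of length $\tau$ on the torus contains none of the $n$ uniform points yields $p_u = (1-\tau)^n$, so $\lambda_n(\tau) = n p_u = n(1-\tau)^n$.

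Next, I would invoke the Stein-Chen inequality in its local-dependence form: for a sum of indicators $W$ with mean $\lambda$,
\begin{equation*}
d_{TV}(W, \Pi_\lambda) \leq \min(1, 1/\lambda)\left(\sum_{u}\sum_{v \in N_u} p_u p_v + \sum_{u}\sum_{v \in N_u \setminus \{u\}} \E[\xi_u \xi_v]\right),
\end{equation*}
where $N_u$ is a local dependence neighborhood of $u$. In our setting, the status of $\xi_u$ is determined only by which other points fall inside the arc of length $\tau$ adjacent to $X_u$, so $\xi_u$ and $\xi_v$ become independent once the two associated arcs are disjoint, which is exactly the localization that Stein-Chen exploits. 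Each joint probability $\E[\xi_u \xi_v]$ reduces to the probability that a certain union of short arcs contains none of the remaining $n-2$ points, admitting a closed form of the shape $(1 - L_{uv})^{n-2}$ (after integrating the positions of $X_u, X_v$) with $L_{uv}$ being the total arc length. The hypothesis $2\tau < 1$ ensures that these arcs fit on the torus without self-overlap.

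The main obstacle will be to verify that the resulting local-dependence bound collapses exactly to the stated closed form $B_n(\tau) = n(1-\tau)^n - (n-1)(1 - \tau/(1-\tau))^n$, rather than to a looser expression. The key algebraic identity behind this is $1 - \tau/(1-\tau) = (1-2\tau)/(1-\tau)$, so that $(1 - \tau/(1-\tau))^n = (1-2\tau)^n/(1-\tau)^n$. This strongly suggests that the subtracted term arises as the expected number of ordered neighbor pairs for which the extended arc of length $2\tau$ (the union of two adjacent $\tau$-arcs) contains none of the remaining points, which is precisely the overlap correction produced by the Stein-Chen sum over the dependence neighborhood. Making this correspondence rigorous requires careful bookkeeping of which Stein-Chen terms cancel via the size-biased coupling or exchangeable-pair construction used in the original proof, and verifying that the residual contributions from the boundary event $\{W = 0\}$ and from the convention change relative to the original reference are of lower order, and hence do not disturb the final bound $B_n(\tau)$.
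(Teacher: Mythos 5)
Your high-level strategy---encode the number of components of the circular $RGG(n,\tau)$ as a sum $W=\sum_u \xi_u$ of large-gap indicators and control its law by the Stein--Chen method---is the same as the paper's. The difference is that the paper never redoes the Stein--Chen computation: it unravels the circle at a fixed point, recognizes the interval model of Han et al.\ with its $n-1$ inter-point spacing indicators $\chi_2,\dots,\chi_n$, adds the single extra indicator $\chi_1=\mathbf{1}\{L_1+L_{n+1}>\tau\}$ for the wrap-around gap, and then invokes Theorem~8.1 of that reference verbatim with $n$ indicators in place of $n-1$; this is exactly what turns $(n-1)(1-\tau)^n$ and $(n-2)(1-\tau/(1-\tau))^n$ into $n(1-\tau)^n$ and $(n-1)(1-\tau/(1-\tau))^n$. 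The genuine gap in your proposal is that the one step carrying all of the content---verifying that the local-dependence bound collapses to exactly $B_n(\tau)=n(1-\tau)^n-(n-1)(1-\tau/(1-\tau))^n$---is flagged as ``the main obstacle'' and then deferred to unspecified ``careful bookkeeping.'' The identity $1-\tau/(1-\tau)=(1-2\tau)/(1-\tau)$ and the guess that the subtracted term is an adjacent-pair overlap correction are plausible but are not a derivation; as written, the proposal restates what must be proved. The economical repair is the paper's reduction: cite the interval-case theorem and argue only that the circle contributes one additional indicator of the same type.

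Two further slips need attention. First, your component bookkeeping is inverted: on the circle the number of components equals $\max(W,1)$, so ``components $-1$'' equals $W$ exactly on $\{W=0\}$ and equals $W-1$ on $\{W\ge 1\}$---the opposite of what you assert---and the discrepancy event is $\{W\ge 1\}$, whose probability is not of order $(1-\tau)^n$ (in the regime where the lemma is applied, $\lambda_n(\tau)\to\infty$ and $\mathbb{P}(W\ge 1)\to 1$, so this error cannot be ``absorbed into $B_n(\tau)$''). What the downstream argument in Theorem~\ref{thm:QboundAlgo2} actually requires is only the deterministic upper bound $\max(W,1)\le W+1$ on the number of components, so you should state and use that inequality rather than an approximate distributional identity. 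Second, your marginal is off by one in the exponent: the $\tau$-arc is anchored at $X_u$, so only the remaining $n-1$ points must avoid it, giving $p_u=(1-\tau)^{n-1}$ and $\E[W]=n(1-\tau)^{n-1}$ rather than $n(1-\tau)^n$; the gap is asymptotically harmless, but it means the constant you claim to derive does not follow from the computation you describe.
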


The key idea of the proof of Theorem 8.1 in~\cite{han2008connectivity} is observing that the number of disjoint components can be related to the indicator functions of the spacing of uniform random variables.
Note that these indicator functions are nothing but properly correlated Bernoulli random variables which can be approximated by a Poisson random variable where the total variation can be bounded via Stein-Chen's method~\cite{chen1975poisson}. See more details in~\cite{han2008connectivity}, and the modification for our setting in the supplementary material.

\begin{lemma}[Poisson tail bound~\cite{ref:poissontailbound}]\label{lma:Poisson_tail}
   Let $X \sim \Pi_{\lambda}$ be  a Poisson random variable with parameter $\lambda$. For all $y>0$, $$\mathbb{P}\left(X \geq \lambda + y \right) \leq \exp \left (-\frac{y^2}{2(\lambda+y)} \right).$$
\end{lemma}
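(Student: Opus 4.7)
The plan is to use the standard exponential-moment (Chernoff) bound and then convert the resulting expression to the cleaner subgaussian-style form stated in the lemma. Since $X \sim \Pi_\lambda$ has moment generating function $\E[e^{tX}] = \exp(\lambda(e^t - 1))$, Markov's inequality applied to $e^{tX}$ yields, for every $t > 0$,
$$\mathbb{P}(X \geq \lambda + y) \leq \exp\bigl(\lambda(e^t - 1) - t(\lambda + y)\bigr).$$
I would then minimize the exponent over $t > 0$. Differentiating gives the stationary point $t^\star = \log(1 + y/\lambda)$, which is strictly positive since $y > 0$, so the optimum lies in the feasible region. Plugging $t^\star$ back in produces the tight Cram\'er transform bound
$$\mathbb{P}(X \geq \lambda + y) \leq \exp\left(y - (\lambda + y)\log\left(1 + \frac{y}{\lambda}\right)\right).$$

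The remaining work is purely analytic: I must show that
$$(\lambda + y)\log\left(1 + \frac{y}{\lambda}\right) - y \geq \frac{y^2}{2(\lambda + y)}.$$
Setting $u = y/\lambda$ reduces this to the one-variable inequality $(1+u)\log(1+u) - u \geq u^2/(2(1+u))$ for $u \geq 0$. The plan is to prove this by defining $g(u) = (1+u)\log(1+u) - u - u^2/(2(1+u))$, observing that $g(0) = g'(0) = 0$, and then computing $g''(u) = u(u+2)/(1+u)^3$, which is manifestly non-negative on $[0,\infty)$. Integrating twice gives $g(u) \geq 0$, which is exactly the needed inequality, and completes the proof after exponentiating.

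The only potentially delicate step is the one-variable inequality in the last paragraph; the rest is textbook Chernoff machinery. This step is not truly difficult, but it is where the entire character of the bound is determined, since the raw Chernoff output $\exp\bigl(y - (\lambda+y)\log(1+y/\lambda)\bigr)$ is sharper than the stated bound but far less transparent than the Gaussian-like form $\exp\bigl(-y^2/(2(\lambda+y))\bigr)$. The inequality $(1+u)\log(1+u) - u \geq u^2/(2(1+u))$ is well known (it appears in the classical derivation of Bennett's inequality and is a slight weakening of the sharper Bennett form with $1 + u/3$ in the denominator), so I do not anticipate any real obstacle beyond a careful derivative calculation.
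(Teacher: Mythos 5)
Your proof is correct. Note, however, that the paper does not prove this lemma at all: it is imported verbatim from the cited reference \cite{ref:poissontailbound} and used as a black box in the proof of Theorem 4.7, so there is no in-paper argument to compare against. Your derivation is the standard one: the Chernoff/Cram\'er bound with the Poisson moment generating function $\exp(\lambda(e^t-1))$, optimized at $t^\star=\log(1+y/\lambda)$, followed by the Bennett-type inequality $(1+u)\log(1+u)-u\geq u^2/(2(1+u))$. I checked the delicate step: with $g(u)=(1+u)\log(1+u)-u-u^2/(2(1+u))$ one indeed gets $g(0)=g'(0)=0$ and $g''(u)=u(u+2)/(1+u)^3\geq 0$, so the reduction and the final inequality both hold. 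The only stylistic remark is that for the purposes of this paper the sharper intermediate bound $\exp\bigl(y-(\lambda+y)\log(1+y/\lambda)\bigr)$ would also have sufficed, but the Gaussian-like form is what the authors invoke, and your argument delivers exactly that.
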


We are now ready to state the sketch of the proof of Theorem~\ref{thm:QboundAlgo2}.
First, we use Lemma~\ref{lma:lemmaR} to find the largest distance $R$ such that,  with high probability, all intra-cluster edges with distance smaller than $R$  will not be removed during \textbf{Phase 1} of Algorithm~\ref{alg:alter_ALonGBM}. Next, we note that the number of disjoint components in $G_r$ can be upper bounded by twice the number of disjoint components in an RGG$(\frac{n}{2},R\frac{\log(n)}{n})$. Using Lemma~\ref{lma:Poissonapprox}, we approximate the number of disjoint components in RGG$(\frac{n}{2},R\frac{\log(n)}{n})$ as a Poisson random variable with parameter given in Lemma~\ref{lma:Poissonapprox}.
Combining Lemma~\ref{lma:Poissonapprox} and \ref{lma:Poisson_tail}, we are able to establish an upper bound on the number of disjoint components in $G_r$, which directly leads to the query complexity bound. The rigorous proof of Theorem~\ref{thm:QboundAlgo2} is deferred to the supplementary material.

\section{Experimental Results}\label{sec:simulation}
\subsection{Synthetic Datasets}
\begin{figure*}[!t]
  \centering
    \subfloat[\label{fig:Q_GBM_b4}]{\includegraphics[width=0.42\linewidth]{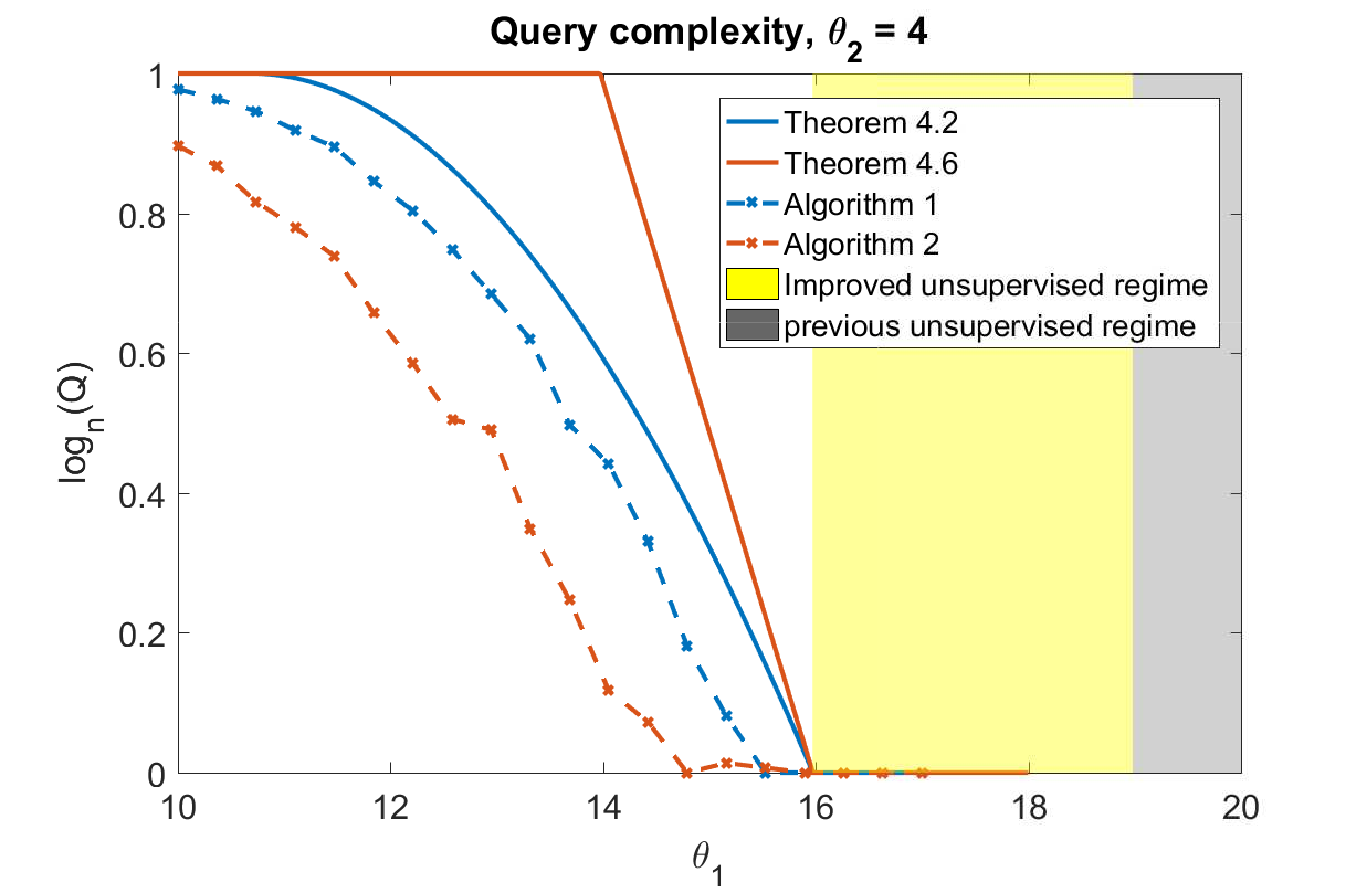}}
    \subfloat[\label{fig:Q_GBM_b5}]{\includegraphics[width=0.42\linewidth]{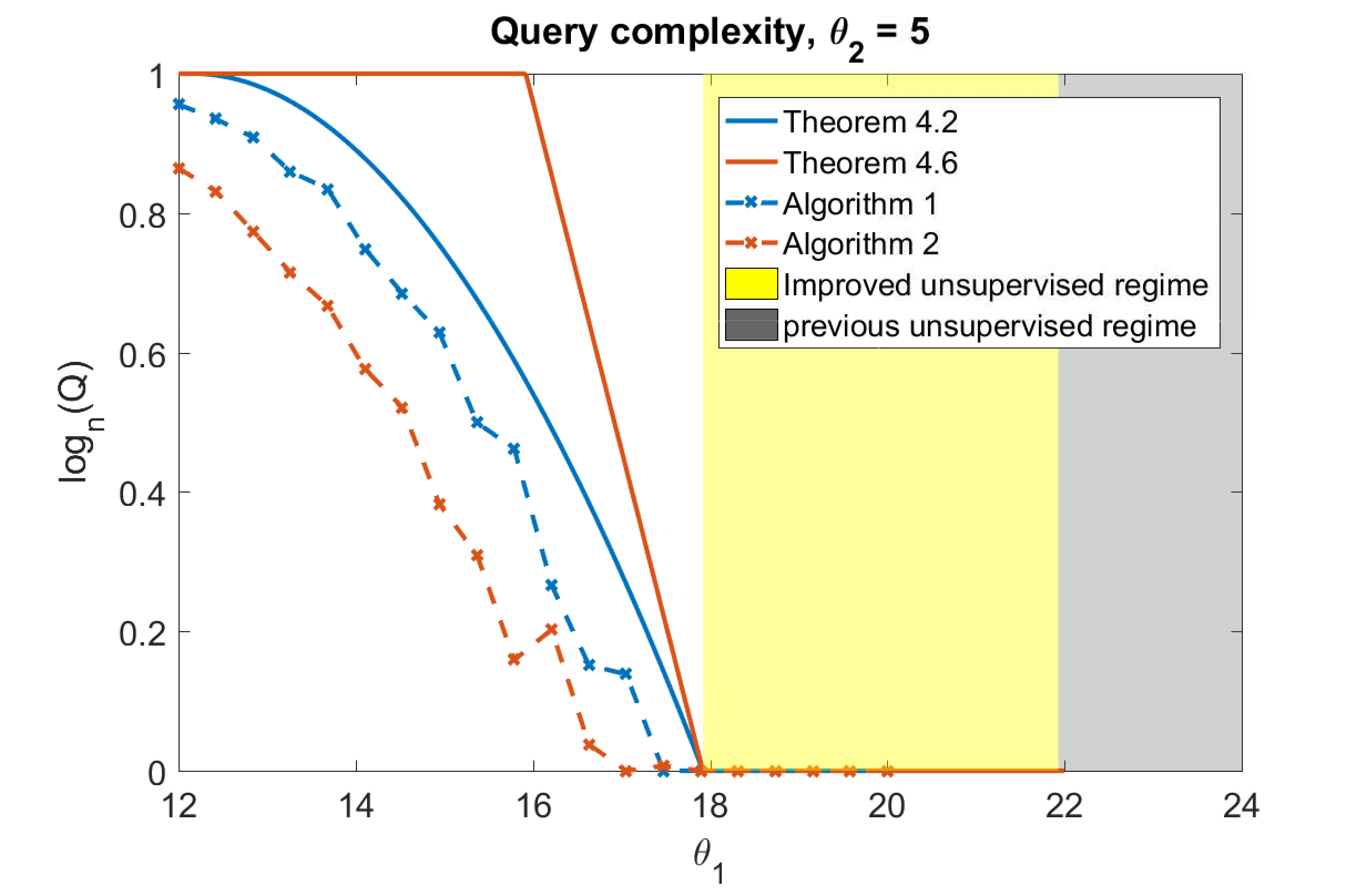}}
  \caption{Query Complexity of our active learning algorithms in the GBM, where we use $Q$ to denote the query complexity. Results are averaged over $20$ independent trials.  The light yellow shaded area indicates the improvement of our approach compared with ~\cite{galhotra2018connectivity}(grey shaded area) in the unsupervised setting. For Theorem~\ref{thm:ALonGBM} and~\ref{thm:QboundAlgo2}, we only plot the main term in the theoretical bounds, $n^{1-\epsilon}$ and $n^{1-R/2}$.}\label{fig:Q_GBM}
\end{figure*}

We generate random graphs using a GBM$(n,\sigma,\theta_1,\theta_2)$ where $n = 1000$ and $\sigma$ is chosen arbitrarily among the equal-size community assignment. We plot the query complexity as a function of $\theta_1$ for some fixed $\theta_2$ in Figure~\ref{fig:Q_GBM}. The figures for the other choices of $\theta_2$ are deferred to the supplementary material.
Figure~\ref{fig:Q_GBM} plots the logarithm of the query complexity (i.e $\log_n(Q)$) as a function of $\theta_1$ for a given $\theta_2$.\footnote{Note that $\log_n(Q)<1$ implies that a sub-linear number of queries can achieve exact recovery.}
Figure~\ref{fig:Q_GBM} shows that
our results improve the previously known bound for unsupervised methods given in~\cite{galhotra2018connectivity} for a wide range of $\theta_2$, as already stated in Section~\ref{sec:Analysis}, and our theorems capture the behavior of the query complexity for both Algorithm~\ref{alg:ALonGBM} and ~\ref{alg:alter_ALonGBM}.

As expected, from Figure~\ref{fig:Q_GBM},
 we observe that for a fixed $\theta_2$, as $\theta_1$ decreases, we need more queries in order to achieve exact recovery. This number of queries is a very little fraction of the total number nodes, especially as $n$ grows large. For example, in our experiment we set $n = 1000$. Then, using $n^{0.5}$ queries means we only use around $32$ queries, which is just $3 \%$ of the total number of nodes.

Interestingly, Theorem~\ref{thm:ALonGBM} offers a lower query complexity bound compared with Theorem~\ref{thm:QboundAlgo2}. However, from Figure~\ref{fig:Q_GBM}, we can see that Algorithm~\ref{alg:alter_ALonGBM} has a lower query complexity in practice, which implies that our Theorem~\ref{thm:QboundAlgo2} can be improved. In fact, in our analysis of the number of disjoint components, we only take into account the edges that with high probability will not be removed.  However, there are still some edges which will be removed only with constant probability, and each of them could potentially reduce the number of disjoint components by $1$. Nevertheless, this analysis is much more complicated since these edges are not independent, and hence is left for future work.



\subsection{Real Datasets}
We use the following real datasets:
\begin{itemize}
    \item \textbf{Political Blogs (PB):}~\cite{adamic2005political} It contains a list of political blogs from the 2004 US Election classified as liberal or conservative, and links between blogs. The clusters are of roughly the same size $(586,636)$ with a total of $1222$ nodes and $16714$ edges.
    \item \textbf{LiveJournal (LJ):}~\cite{yang2015defining} The LiveJournal dataset is a free online blogging social network of around $4$ million users. We extract the top two clusters of sizes (1430, 936) which consist of around $11.5$K edges.
\end{itemize}

\textbf{Experimental Setting:} For real-world networks $G = (V,E)$, it is hard to obtain an exact threshold as the actual values of $\theta_1$ and $\theta_2$ are unknown. Hence, following the idea proposed in~\cite{galhotra2018geometric}, we use a similar but much more intuitive approach compared with~\cite{galhotra2018geometric}, which consists of 3 phases.
In the first phase, we set a threshold $T_1$.  We remove all edges $(u,v)$  covered by less than $T_1$ triangles, and we identify $\mathcal{V}_0$  as the largest connected component of the obtained graph. In the second phase, we directly apply the $S^2$ algorithm on $\mathcal{V}_0$ and terminate it when we find $2$ non-singleton disjoint components in $\mathcal{V}_0$. Finally, in the third phase, we take majority voting to decide the cluster of $w\in  V \setminus \mathcal{V}_0$ based on all $u \in \mathcal{V}_0$ such that the edge $(u,w)$ exists.
Note that, in contrast with  the unsupervised method used in~\cite{galhotra2018geometric}, where two more hyperparameters $T_2$ and $T_3$ are required,
our active learning method only needs one hyperparameter $T_1$.
We use GMPS18 to denote the unsupervised method in~\cite{galhotra2018geometric} and Spectral to denote the standard spectral method.
All results are averaged over $100$ independent trials.
\footnotesize
\begin{table}[h]
\centering
  \begin{tabular}{lSSSS}
    \toprule
    \multirow{2}{*}{Method} &
      \multicolumn{2}{c}{Accuracy} &
      \multicolumn{2}{c}{Query complexity ($\%$)} \\
      & {PB} & {LJ} & {PB} & {LJ} \\
      \midrule
         Ours &  0.931 & 0.912 & 3.7$\%$ & 0.88$\%$\\
        \midrule
         Spectral & 0.53 & 0.64 & 0 & 0 \\
        \midrule
         GMPS18 & 0.788 & 0.777 & 0 & 0  \\
         \midrule
         $S^2$ & 0.97 & 0.999 & 47.2$\%$ & 9.2 $\%$\\
        \bottomrule
  \end{tabular}
  \caption{Performance on real-world dataset. }\label{tbl:realdata}
\end{table}
\normalsize

For our method, we choose $T_1 = 30$ for the Political Blogs dataset and $T_1=5$ for the LiveJournal dataset. From Table~\ref{tbl:realdata}, we can see that our active learning method only queries $3.7\%$ of nodes and significantly improves the accuracy from $0.788$ to $0.931$ in the Political Blogs dataset. Also, note that if we directly apply $S^2$ without using triangle counting, it will query 47.2$\%$ of nodes before termination. Apparently, this is too expensive in terms of query complexity. A similar result can also be observed on the LiveJournal dataset. Hence, combining triangle counting is necessary for obtaining a practical solution in the active learning framework when we have a limited query budget.

\section*{Acknowledgments}
The authors thank Sainyam Galhotra for providing the experiment details on real datasets. 

\small
\bibliography{example_paper}

\begin{thebibliography}{}

\bibitem[\protect\citeauthoryear{Abbe \bgroup et al\mbox.\egroup
  }{2018}]{abbe2018graph}
Abbe, E.; Boix, E.; Ralli, P.; and Sandon, C.
\newblock 2018.
\newblock Graph powering and spectral robustness.
\newblock {\em arXiv preprint}.

\bibitem[\protect\citeauthoryear{Abbe}{2017}]{abbe2017community}
Abbe, E.
\newblock 2017.
\newblock Community detection and stochastic block models: recent developments.
\newblock {\em The Journal of Machine Learning Research} 18(1):6446--6531.

\bibitem[\protect\citeauthoryear{Adamic and Glance}{2005}]{adamic2005political}
Adamic, L.~A., and Glance, N.
\newblock 2005.
\newblock The political blogosphere and the 2004 us election: divided they
  blog.
\newblock In {\em Proceedings of the 3rd international workshop on Link
  discovery},  36--43.
\newblock ACM.

\bibitem[\protect\citeauthoryear{Ahn, Lee, and Suh}{2016}]{ahn2016community}
Ahn, K.; Lee, K.; and Suh, C.
\newblock 2016.
\newblock Community recovery in hypergraphs.
\newblock In {\em 2016 54th Annual Allerton Conference on Communication,
  Control, and Computing (Allerton)},  657--663.
\newblock IEEE.

\bibitem[\protect\citeauthoryear{Cesa-Bianchi \bgroup et al\mbox.\egroup
  }{2013}]{cesa2013active}
Cesa-Bianchi, N.; Gentile, C.; Vitale, F.; and Zappella, G.
\newblock 2013.
\newblock Active learning on trees and graphs.
\newblock {\em arXiv preprint}.

\bibitem[\protect\citeauthoryear{Chen and Yuan}{2006}]{chen2006detecting}
Chen, J., and Yuan, B.
\newblock 2006.
\newblock Detecting functional modules in the yeast protein--protein
  interaction network.
\newblock {\em Bioinformatics} 22(18):2283--2290.

\bibitem[\protect\citeauthoryear{Chen}{1975}]{chen1975poisson}
Chen, L.~H.
\newblock 1975.
\newblock Poisson approximation for dependent trials.
\newblock {\em The Annals of Probability}  534--545.

\bibitem[\protect\citeauthoryear{Chien, Lin, and
  Wang}{2018}]{chien2018community}
Chien, I.; Lin, C.-Y.; and Wang, I.-H.
\newblock 2018.
\newblock Community detection in hypergraphs: Optimal statistical limit and
  efficient algorithms.
\newblock In {\em International Conference on Artificial Intelligence and
  Statistics},  871--879.

\bibitem[\protect\citeauthoryear{Chien, Lin, and Wang}{2019}]{chien2018minimax}
Chien, I.; Lin, C.-Y.; and Wang, I.-H.
\newblock 2019.
\newblock On the minimax misclassification ratio of hypergraph community
  detection.
\newblock {\em IEEE Transactions on Information Theory}.

\bibitem[\protect\citeauthoryear{Chien, Zhou, and Li}{2019}]{chien2019hs}
Chien, I.~E.; Zhou, H.; and Li, P.
\newblock 2019.
\newblock $ {HS^2}$: Active learning over hypergraphs with pointwise and
  pairwise queries.
\newblock In {\em The 22nd International Conference on Artificial Intelligence
  and Statistics},  2466--2475.

\bibitem[\protect\citeauthoryear{{Cl{\'e}ment
  Canonne}}{2019}]{ref:poissontailbound}
{Cl{\'e}ment Canonne}.
\newblock 2019.
\newblock A short note on poisson tail bounds.

\bibitem[\protect\citeauthoryear{Cohn, Atlas, and
  Ladner}{1994}]{cohn1994improving}
Cohn, D.; Atlas, L.; and Ladner, R.
\newblock 1994.
\newblock Improving generalization with active learning.
\newblock {\em Machine learning} 15(2):201--221.

\bibitem[\protect\citeauthoryear{Dasarathy, Nowak, and
  Zhu}{2015}]{dasarathy2015s2}
Dasarathy, G.; Nowak, R.; and Zhu, X.
\newblock 2015.
\newblock S2: An efficient graph based active learning algorithm with
  application to nonparametric classification.
\newblock In {\em Conference on Learning Theory},  503--522.

\bibitem[\protect\citeauthoryear{Devroye \bgroup et al\mbox.\egroup
  }{2011}]{devroye2011high}
Devroye, L.; Gy{\"o}rgy, A.; Lugosi, G.; Udina, F.; et~al.
\newblock 2011.
\newblock High-dimensional random geometric graphs and their clique number.
\newblock {\em Electronic Journal of Probability} 16:2481--2508.

\bibitem[\protect\citeauthoryear{Fortunato}{2010}]{fortunato2010community}
Fortunato, S.
\newblock 2010.
\newblock Community detection in graphs.
\newblock {\em Physics reports} 486(3-5):75--174.

\bibitem[\protect\citeauthoryear{Gadde \bgroup et al\mbox.\egroup
  }{2016}]{gadde2016active}
Gadde, A.; Gad, E.~E.; Avestimehr, S.; and Ortega, A.
\newblock 2016.
\newblock Active learning for community detection in stochastic block models.
\newblock In {\em 2016 IEEE International Symposium on Information Theory
  (ISIT)},  1889--1893.
\newblock IEEE.

\bibitem[\protect\citeauthoryear{Galhotra \bgroup et al\mbox.\egroup
  }{2018}]{galhotra2018geometric}
Galhotra, S.; Mazumdar, A.; Pal, S.; and Saha, B.
\newblock 2018.
\newblock The geometric block model.
\newblock In {\em Thirty-Second AAAI Conference on Artificial Intelligence}.

\bibitem[\protect\citeauthoryear{Galhotra \bgroup et al\mbox.\egroup
  }{2019}]{galhotra2018connectivity}
Galhotra, S.; Mazumdar, A.; Pal, S.; and Saha, B.
\newblock 2019.
\newblock Connectivity in random annulus graphs and the geometric block model.
\newblock {\em The International Conference on Randomization and Computation}.

\bibitem[\protect\citeauthoryear{Ghoshdastidar, Dukkipati, and
  others}{2017}]{ghoshdastidar2017consistency}
Ghoshdastidar, D.; Dukkipati, A.; et~al.
\newblock 2017.
\newblock Consistency of spectral hypergraph partitioning under planted
  partition model.
\newblock {\em The Annals of Statistics} 45(1):289--315.

\bibitem[\protect\citeauthoryear{Goel \bgroup et al\mbox.\egroup
  }{2005}]{goel2005monotone}
Goel, A.; Rai, S.; Krishnamachari, B.; et~al.
\newblock 2005.
\newblock Monotone properties of random geometric graphs have sharp thresholds.
\newblock {\em The Annals of Applied Probability} 15(4):2535--2552.

\bibitem[\protect\citeauthoryear{Gu and Han}{2012}]{gu2012towards}
Gu, Q., and Han, J.
\newblock 2012.
\newblock Towards active learning on graphs: An error bound minimization
  approach.
\newblock In {\em 2012 IEEE 12th International Conference on Data Mining},
  882--887.
\newblock IEEE.

\bibitem[\protect\citeauthoryear{Guillory and Bilmes}{2009}]{guillory2009label}
Guillory, A., and Bilmes, J.~A.
\newblock 2009.
\newblock Label selection on graphs.
\newblock In {\em Advances in Neural Information Processing Systems},
  691--699.

\bibitem[\protect\citeauthoryear{Gupta and Kumar}{1999}]{gupta1999critical}
Gupta, P., and Kumar, P.~R.
\newblock 1999.
\newblock Critical power for asymptotic connectivity in wireless networks.
\newblock In {\em Stochastic analysis, control, optimization and applications}.
  Springer.
\newblock  547--566.

\bibitem[\protect\citeauthoryear{Han and Makowski}{2008}]{han2008connectivity}
Han, G., and Makowski, A.~M.
\newblock 2008.
\newblock Connectivity in one-dimensional geometric random graphs: Poisson
  approximations, zero-one laws and phase transitions.

\bibitem[\protect\citeauthoryear{Holland and
  Leinhardt}{1971}]{holland1971transitivity}
Holland, P.~W., and Leinhardt, S.
\newblock 1971.
\newblock Transitivity in structural models of small groups.
\newblock {\em Comparative group studies} 2(2):107--124.

\bibitem[\protect\citeauthoryear{Holland, Laskey, and
  Leinhardt}{1983}]{holland1983stochastic}
Holland, P.~W.; Laskey, K.~B.; and Leinhardt, S.
\newblock 1983.
\newblock Stochastic blockmodels: First steps.
\newblock {\em Social networks} 5(2):109--137.

\bibitem[\protect\citeauthoryear{Maehara}{1990}]{maehara1990intersection}
Maehara, H.
\newblock 1990.
\newblock On the intersection graph of random arcs on the cycle.
\newblock {\em Random Graphs' 87}.

\bibitem[\protect\citeauthoryear{Mossel, Neeman, and
  Sly}{2015}]{mossel2015consistency}
Mossel, E.; Neeman, J.; and Sly, A.
\newblock 2015.
\newblock Consistency thresholds for the planted bisection model.
\newblock In {\em Proceedings of the forty-seventh annual ACM symposium on
  Theory of computing},  69--75.
\newblock ACM.

\bibitem[\protect\citeauthoryear{Paul, Milenkovic, and
  Chen}{2018}]{paul2018higher}
Paul, S.; Milenkovic, O.; and Chen, Y.
\newblock 2018.
\newblock Higher-order spectral clustering under superimposed stochastic block
  model.
\newblock {\em arXiv preprint}.

\bibitem[\protect\citeauthoryear{Penrose and others}{2003}]{penrose2003random}
Penrose, M., et~al.
\newblock 2003.
\newblock {\em Random geometric graphs}, volume~5.
\newblock Oxford university press.

\bibitem[\protect\citeauthoryear{Sahebi and Cohen}{2011}]{pittir13328}
Sahebi, S., and Cohen, W.
\newblock 2011.
\newblock Community-based recommendations: a solution to the cold start
  problem.
\newblock In {\em Workshop on Recommender Systems and the Social Web (RSWEB)}.

\bibitem[\protect\citeauthoryear{Sankararaman and
  Baccelli}{2018}]{sankararaman2018community}
Sankararaman, A., and Baccelli, F.
\newblock 2018.
\newblock Community detection on euclidean random graphs.
\newblock In {\em Proceedings of the Twenty-Ninth Annual ACM-SIAM Symposium on
  Discrete Algorithms},  2181--2200.
\newblock SIAM.

\bibitem[\protect\citeauthoryear{Shi and Malik}{2000}]{shi2000normalized}
Shi, J., and Malik, J.
\newblock 2000.
\newblock Normalized cuts and image segmentation.
\newblock {\em Departmental Papers (CIS)}  107.

\bibitem[\protect\citeauthoryear{Wasserman, Faust, and
  others}{1994}]{wasserman1994social}
Wasserman, S.; Faust, K.; et~al.
\newblock 1994.
\newblock {\em Social network analysis: Methods and applications}, volume~8.
\newblock Cambridge university press.

\bibitem[\protect\citeauthoryear{Yang and Leskovec}{2015}]{yang2015defining}
Yang, J., and Leskovec, J.
\newblock 2015.
\newblock Defining and evaluating network communities based on ground-truth.
\newblock {\em Knowledge and Information Systems} 42(1):181--213.

\bibitem[\protect\citeauthoryear{Zhu, Lafferty, and
  Ghahramani}{2003}]{zhu2003combining}
Zhu, X.; Lafferty, J.; and Ghahramani, Z.
\newblock 2003.
\newblock Combining active learning and semi-supervised learning using gaussian
  fields and harmonic functions.
\newblock In {\em ICML 2003 workshop on the continuum from labeled to unlabeled
  data in machine learning and data mining}, volume~3.

\end{thebibliography}
\normalsize
\bibliographystyle{aaai}

\section*{Supplement}
\section{Main theorem for unsupervised method}
\begin{theorem}[Restating Theorem 4,12 in~\cite{galhotra2018connectivity}]\label{thm:GBM}
    Assume $\theta_1 \geq 2\theta_2$. Define
    \small{
    \begin{align}\label{req:t1234}
        & t_1 = \inf\left\{t\geq 0:(2\theta_2+t)\log(\frac{2\theta_2+t}{2\theta_2})-t>1\right\}\nonumber\\
        & t_2 = \inf\left\{t\geq 0:(2\theta_2-t)\log(\frac{2\theta_2-t}{2\theta_2})+t>1\right\}\nonumber\\
        & t_3 = \sup\Bigg\{t:(4\theta_2+2t_1)\log(\frac{4\theta_2+2t_1}{2\theta_1-t})\nonumber\\
        &+2\theta_1-t-4\theta_2-2t_1>2 \text{ and } 0\leq t\leq 2\theta_1-4\theta_2-2t_1 \Bigg\}\nonumber\\
        & t_4 = \inf\Bigg\{t:(4\theta_2-2t_2)\log(\frac{4\theta_2-2t_2}{2\theta_1-t})\nonumber\\
        &+2\theta_1-t-4\theta_2+2t_2>2 \text{ and }\nonumber\\
        &\max\{2\theta_2,2\theta_1-4\theta_2+2t_2\}\leq t\leq \theta_1 \Bigg\}
    \end{align}
    }
    Then by choosing $E_L = (2\theta_2-t_2)\frac{\log(n)}{n}$ and $E_R = (2\theta_2+t_1)\frac{\log(n)}{n}$, we can recover the correct partition with probability at least $1-o(1)$ if $\theta_1 - t_4 + t_3>2$ or $\theta_1 >\max\{1+t_4,2\}$.
\end{theorem}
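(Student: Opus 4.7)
The plan is to apply a two-sided triangle-count filter suggested by the very form of $E_L$ and $E_R$: retain an edge $(u,v)$ precisely when $T^{uv}$ lies outside the ``ambiguous band'' $[E_L n, E_R n]=[(2\theta_2-t_2)\log n,(2\theta_2+t_1)\log n]$, and output the two connected components of the resulting graph $G_r$. The argument then splits into three high-probability claims: (i) every cross-cluster edge lands inside the band and is removed; (ii) every intra-cluster edge lands outside the band and is kept; (iii) each cluster's surviving subgraph is still connected. If all three hold with probability $1-o(1)$, the two underlying clusters coincide exactly with the two connected components of $G_r$, giving exact recovery.

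For (i), Lemma~\ref{lma:Binomials} gives $T^{uv}\sim\mathrm{Bin}(n-2,2\theta_2\log(n)/n)$ for a cross-cluster edge, with mean $\simeq 2\theta_2\log n$. The Chernoff/KL upper-tail bound yields
\[
\Pr\!\bigl[T^{uv}\geq (2\theta_2+t)\log n\bigr]\leq \exp\!\Bigl(-\log(n)\bigl[(2\theta_2+t)\log\tfrac{2\theta_2+t}{2\theta_2}-t\bigr]\Bigr),
\]
and the defining inequality of $t_1$ forces the exponent strictly above $1$, so the probability is $o(n^{-1})$; the constant $t_2$ plays the symmetric role for the lower tail. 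A union bound over the $O(n^2)$ potential edges removes every cross-cluster edge with probability $1-o(1)$.

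For (ii), write $\phi=n\,d_{uv}/\log n\in[0,\theta_1]$. Lemma~\ref{lma:Binomials} shows that the intra-cluster triangle count concentrates around $(\theta_1+\theta_2-\phi)\log n$ if $\phi\leq 2\theta_2$ and around $(\theta_1-\phi/2)\log n$ otherwise. I would split $\phi$ into a ``small'' regime where the mean lies safely above $E_R n$ and a ``large'' regime where it lies safely below $E_L n$; the KL-inversion constants $t_3,t_4$ are defined so that the associated Chernoff tail is $o(n^{-2})$, affording a union bound over both the edges and the random geometry. The sufficient condition $\theta_1-t_4+t_3>2$ is precisely the algebraic statement that the two safe $\phi$-ranges together cover $[0,\theta_1]$, leaving no intra-cluster edge trapped in the ambiguous band; the alternative $\theta_1>\max\{1+t_4,2\}$ handles the degenerate regime in which the intra-cluster mean stays on one side of the cross-cluster mean, so one-sided filtering suffices.

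For (iii), by Remark~\ref{gbmandrgg} each cluster is an $RGG(n/2,\theta_1\log(n)/n)$, and (ii) guarantees that the surviving subgraph retains every intra-cluster edge whose $\phi$ lies in the safe ranges; under either hypothesis this subgraph dominates an $RGG(n/2,\theta_{\mathrm{eff}}\log(n)/n)$ with $\theta_{\mathrm{eff}}$ strictly above the Penrose connectivity threshold $1$, and so each cluster stays connected w.h.p. The chief obstacle I anticipate is the uniform control in step (ii): because the intra-cluster distribution is piecewise (the second binomial disappears when $\phi>2\theta_2$), one must separately analyze the pre- and post-transition regimes, verify monotonicity of the KL rate function in $\phi$ so that checking the worst-case $\phi$ in each piece suffices, and finally verify algebraically that the no-gap condition $\theta_1-t_4+t_3>2$ really prevents any admissible $\phi$ from surviving in the ambiguous band.
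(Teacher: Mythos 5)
The paper itself does not prove this statement: it is quoted verbatim from Galhotra et al.~\cite{galhotra2018connectivity} (their Theorems 4 and 12), and the only thing the supplement adds is the description of the edge-removal policy (remove $(u,v)$ when $T^{uv}\in[nE_L,nE_R]$). So there is no in-paper proof to compare against; I can only assess your outline on its merits. Your steps (i) and (ii) do capture the intended mechanism: $t_1,t_2$ are the KL/Chernoff constants that confine the cross-edge count $\mathrm{Bin}(n-2,2\theta_2\log(n)/n)$ to the band, and $t_3,t_4$ are the largest/smallest distance parameters $\phi$ for which an intra-cluster edge escapes the band from above (mean $(\theta_1+\theta_2-\phi)\log n$ or $(\theta_1-\phi/2)\log n$ staying above $E_Rn$) or from below (mean falling under $E_Ln$), respectively. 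One slip there: a union bound over ``$O(n^2)$ potential edges'' does not close with a per-pair tail of only $n^{-1-o(1)}$; you must restrict to the $O(n\log n)$ edges actually present (cf.\ Lemma~\ref{lma:totedges}), or equivalently fold in the $O(\log n/n)$ edge probability.

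The genuine gap is in step (iii). The condition $\theta_1-t_4+t_3>2$ is \emph{not} the statement that the safe ranges $[0,t_3]$ and $[t_4,\theta_1]$ cover $[0,\theta_1]$ --- that would require $t_3\geq t_4$, which is far stronger. In general $t_3<t_4$, so intra-cluster edges with $\phi\in(t_3,t_4)$ may be deleted, and the surviving subgraph of each cluster is a random \emph{annulus} graph whose connection set is $[0,t_3]\cup[t_4,\theta_1]$ (in units of $\log(n)/n$), with a genuine hole in the middle. Such a graph does not stochastically dominate an $RGG(n/2,\theta_{\mathrm{eff}}\log(n)/n)$ for any $\theta_{\mathrm{eff}}$ above the Penrose threshold, so your domination argument fails. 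The conditions $\theta_1-t_4+t_3>2$ (total measure of the retained annuli exceeding the threshold) and $\theta_1>\max\{1+t_4,2\}$ (connectivity using only the outer annulus $[t_4,\theta_1]$) are precisely the sufficient conditions for connectivity of these random annulus graphs, and proving that connectivity is the main technical content of~\cite{galhotra2018connectivity}; it is also the reason the present paper emphasizes that its own Lemma~\ref{lma:eta_GBM} deliberately avoids annulus graphs by using a one-sided threshold that keeps the full ball $[0,2]$. Without importing or reproving the annulus-graph connectivity result, your step (iii) does not go through.
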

In the paper~\cite{galhotra2018connectivity}, their edge removing policy is different from both of our Algorithm~\ref{alg:ALonGBM} and Algorithm~\ref{alg:alter_ALonGBM}. For any edge $(u,v)$, it will be removed if the triangle count fall in the interval $[nE_L,nE_R]$.
\section{Proof of Lemma~\ref{lma:eta_GBM}}
\label{sec:Proof{lma:eta_GBM}}
\begin{lemma_nonum}
    Assume $\theta_2 > 1$. Let
    \scriptsize
    \begin{equation*}
        \eta = \inf\left\{t\geq 0:(\theta_1+\theta_2-2-t)\log(\frac{\theta_1+\theta_2-2-t}{\theta_1+\theta_2-2})+t>1\right\}
    \end{equation*}
    \normalsize
    Then by choosing $E_T = (\theta_1+\theta_2-2-\eta)\frac{\log(n)}{n}$, the underlying communities remains connected in graph $G_r$.
\end{lemma_nonum}

\begin{remark}\label{rmk:eta_detail}
    Note that  we need
    \begin{align*}
        & \eta = \inf\Bigg\{t\geq 0:(\theta_1+\theta_2-2-t)\log(\frac{\theta_1+\theta_2-2-t}{\theta_1+\theta_2-2})+t\\
        & \geq1+\frac{2\log\log(n)}{\log(n)}+\frac{1}{n}\frac{2(2\theta_1-2)t}{\theta_1+\theta_2-2}\Bigg\}.
    \end{align*}
    However, since
    $$
    \frac{2\log\log(n)}{\log(n)}+\frac{1}{n}\frac{2(2\theta_1-2)t}{\theta_1+\theta_2-2} \simeq o(1)
    $$
when $n$ is sufficiently large,  we choose, for simplicity,  $\eta$ as stated in Lemma~\ref{lma:eta_GBM}.  
\end{remark}

Before proving Lemma~\ref{lma:eta_GBM}, we need to introduce the following useful lemma
\begin{lemma}\label{lma:totedges}
    Given $\theta_1\geq 1$. With probability at least $1-\frac{1}{n}$, the number of both in-cluster and total edges are $O(n\log(n))$.
\end{lemma}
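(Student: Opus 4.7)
The plan is to control every node's degree by a Chernoff bound and then sum across nodes. Fix a node $u$ and condition on $X_u$. For each other node $v$, the indicator $\mathbf{1}\{(u,v)\in E\}$ equals $\mathbf{1}\{d_{uv}\leq r_{uv}\}$, where $r_{uv}=\theta_1 \log(n)/n$ if $\sigma(u)=\sigma(v)$ and $r_{uv}=\theta_2 \log(n)/n$ otherwise. Because $X_v$ is uniform on the torus independently of $X_u$, each such indicator is Bernoulli with parameter $2r_{uv}$ regardless of the value of $X_u$, and the indicators for different $v$ are independent given $X_u$. Consequently, the in-cluster degree of $u$ is distributed as $Bin(n/2-1, 2\theta_1 \log(n)/n)$, its cross-cluster degree as $Bin(n/2, 2\theta_2 \log(n)/n)$, and its total degree as the sum of these two independent Binomials; each has mean of order $\log(n)$.

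Next I would apply the multiplicative Chernoff tail $\mathbb{P}(X \geq k\mu) \leq (e/k)^{k\mu}$ to each of these Binomials. Picking a constant $c = c(\theta_1,\theta_2)$ strictly larger than $e(\theta_1+\theta_2)$ gives $\mathbb{P}(\deg(u) \geq c \log(n)) \leq n^{-2}$ for each node $u$, and similarly for the in-cluster degree. A union bound over the $n$ nodes and the two events then yields, with probability at least $1 - 1/n$, that every node has both in-cluster degree and total degree at most $c \log(n)$.

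Finally, since the total number of edges equals one-half the sum of node degrees, and the same identity holds within a single community, this good event forces the in-cluster edge count and the total edge count to be at most $(c/2)\, n \log(n) = O(n \log(n))$, which is the desired bound.

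No step here is really an obstacle; the only delicate point is selecting the Chernoff constant so that the per-node failure probability decays faster than $n^{-2}$ while keeping the resulting $O(n\log n)$ constant explicit in $\theta_1,\theta_2$. The structural observation that makes the argument painless is that, although edges across the whole graph are correlated through shared feature vectors, the edges incident to a single node become conditionally independent once $X_u$ is fixed, which is precisely what enables the per-node binomial representation and hence a clean concentration argument.
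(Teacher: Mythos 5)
Your proof is correct and follows essentially the same route as the paper: represent each node's in-cluster and cross-cluster degree as a Binomial with mean $\Theta(\log n)$, apply a multiplicative Chernoff bound to get a per-node failure probability of $n^{-2}$, and union bound over the $n$ nodes. The only quibble is that taking $c$ merely ``strictly larger than $e(\theta_1+\theta_2)$'' does not by itself force the exponent in $(e/k)^{k\mu}$ below $-2$; you need $c$ large enough that $c\log\bigl(c/(e(\theta_1+\theta_2))\bigr)\geq 2$, which is harmless since the lemma only claims an $O(n\log n)$ bound.
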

\begin{proof}
    This has been proved in~\cite{galhotra2018connectivity}. We include the proof here for completeness. Without loss of generality, let $u\in V_1$. For the other arbitrary node $v\in V_1$, the probability of $(u,v)\in E$ is $\frac{2\theta_1\log(n)}{n}$. Thus the number of in-cluster edges associate with $u$ is $Bin(\frac{n}{2}-1,\frac{2\theta_1\log(n)}{n})$. Hence by standard Chernoff bound, for any $c\geq 1$ we have
    \small
    \begin{align*}
        & \mathbb{P}\left(Bin(\frac{n}{2}-1,\frac{2\theta_1\log(n)}{n}) \geq (1+c)(\frac{n}{2}-1)(\frac{2\theta_1\log(n)}{n})\right) \\
        & \leq  \exp\left(-\frac{c}{3}(\frac{n}{2}-1)(\frac{2\theta_1\log(n)}{n})\right) \\
        & = \exp\left(-\frac{c}{3}(1-o(1))\theta_1\log(n))\right)\\
        & = n^{-(1-o(1))\frac{c}{3}\theta_1} \stackrel{(a)}\leq n^{-(1-o(1))\frac{c}{3}},
    \end{align*}
    \normalsize
    where (a) is due to our assumption $\theta_1\geq 1$. Hence if we choose $c>6$, then with probability at least $1-\frac{1}{n^2}$ the number of in-cluster edges associate with $u$ is $O(\log(n))$. By union bound over all nodes, we know that with probability at least $1-\frac{1}{n}$, the number of in-cluster edges is $O(n\log(n))$. For the number of total edges, we can apply the same argument or simply by the fact that $\theta_1 \geq \theta_2$.
\end{proof}
Now we are ready to prove Lemma~\ref{lma:eta_GBM}.
\begin{proof}
From the classical connectivity result of RGG$(\frac{n}{2},\theta\frac{\log(n)}{n})$, we need $\theta > 2$ in order to have $2$ connected components with high probability. Note that the critical radius of $RGG(\frac{n}{2},\theta\frac{\log(n)}{n})$ is: $$\theta\frac{\log(n)}{n} = \frac{\log(\frac{n}{2})}{\frac{n}{2}},$$ which means $\theta = 2-\frac{2\log(2)}{\log(n)} = 2-o(1)$. For simplicity we will set the critical value to be $2$ which is sufficient for the connectivity.

Hence if we will not remove any edges such that $d_{uv}\leq\frac{2\log(n)}{n}$, then the underlying communities remain connected in graph $G_r$.

From Lemma~\ref{lma:Binomials} we know that for an in-cluster edge $(u,v)$ with distance $x = \frac{\phi\log(n)}{n}$, the number of triangles covering such edge is
    \scriptsize
    \begin{align*}
        T^{uv}(x)\sim &Bin(\frac{n}{2}-2,(2\theta_1-\phi)\frac{\log(n)}{n})\\
        &+\mathbf{1}\{\phi\leq 2\theta_2\}Bin(\frac{n}{2},(2\theta_2-\phi)\frac{\log(n)}{n}).
    \end{align*}
    \normalsize
{Note that the mean of both Binomial decreases as $\phi$ increases. Hence for any threshold $t$ independent of $\phi$, we have:
\begin{equation*}
    \mathbb{P}\left( T^{uv}(\phi\frac{\log(n)}{n})\geq t\right) \geq \mathbb{P}\left( T^{uv}(2\frac{\log(n)}{n})\geq t\right).
\end{equation*}
for all $\phi\leq 2$ when $\sigma(u) = \sigma(v)$.}
Now by assumption we have $2\theta_2 \geq 2$. Then from Lemma~\ref{lma:Binomials} we know that
\begin{align*}
    T^{uv}(\frac{2\log(n)}{n})\sim &Bin(\frac{n}{2}-2,(2\theta_1-2)\frac{\log(n)}{n})\\
    &+Bin(\frac{n}{2},(2\theta_2-2)\frac{\log(n)}{n}).
\end{align*}
Hence it is sufficient to derive the threshold based on the distribution of $T^{uv}(\frac{2\log(n)}{n})$. Next, from Lemma~\ref{lma:totedges} we know that with probability at least $1-\frac{1}{n}$, the number of total edges will be $O(n\log(n))$. Denoting $p = (2\theta_1-2)\frac{\log(n)}{n}$, and $q = (2\theta_2-2)\frac{\log(n)}{n}$,  by Chernoff bound we have:
\scriptsize
\begin{align*}
    &\mathbb{P}\left(T^{uv}(\frac{2\log(n)}{n})\leq nE_T\right) \\
    & \leq \inf_{\xi>0}\exp(\xi nE_T) (1-p+pe^{-\xi})^{\frac{n}{2}-2}(1-q+qe^{-\xi})^{\frac{n}{2}}\\
    &= \inf_{\xi>0}\exp(\xi nE_T+(\frac{n}{2}-2)\log(1+p(e^{-\xi}-1))+\frac{n}{2}(1+q(e^{-\xi}-1)))\\
    &\stackrel{(a)}\leq \inf_{\xi>0}\exp(\xi nE_T+(\frac{n}{2}-2)\log(\exp(p(e^{-\xi}-1)))\\
    &+\frac{n}{2}\log(\exp(q(e^{-\xi}-1))))\\
    &= \inf_{\xi>0}\exp(\xi nE_T+(\frac{n}{2}-2)(p(e^{-\xi}-1))+\frac{n}{2}(q(e^{-\xi}-1)))\\
    &\stackrel{(b)}\leq \exp\Bigg(-\Bigg[(\theta_1+\theta_2-2-\eta)\log(\frac{\theta_1+\theta_2-2-\eta}{\theta_1+\theta_2-2})\\
    &+\eta-\frac{1}{n}\frac{2\eta(2\theta_1-2)}{\theta_1+\theta_2-2}\Bigg]\log(n)\Bigg)\\
    & \stackrel{(c)}\leq \frac{1}{n(\log(n))^2}
\end{align*}
\normalsize
where for (a) we use $1+x\leq e^x,\forall x\in \mathbb{R}$. For (b) we simply choose $\xi = -\log(\frac{\theta_1+\theta_2-2-\eta}{\theta_1+\theta_2-2})$. Inequality (c) follows by our choice of $\eta$. See Remark~\ref{rmk:eta_detail} for the detail discussion. { Then, by union bound over all edges, we know that,
the probability that  exists at least one edge $(u,v)$  with  $\sigma(u) = \sigma(v)$ and distance $x\leq \frac{2\log(n)}{n}$  such that  its  associated  triangle count satisfies $T^{uv}(x)\leq nE_T$,  is smaller or equal than $O(\frac{n\log(n)}{n(\log(n))^2})$. From this,  it follows immediately that  all the edges $(u,v)$  with  $\sigma(u) = \sigma(v)$ have $T^{uv}(x)\geq nE_T$ with probability
$1-O(\frac{n\log(n)}{n(\log(n))^2})= 1-o(1)$, if  $x\leq \frac{2\log(n)}{n}$. Combining this with the fact that the random geometric graph with $\frac{n}{2}$ nodes will be connected with high probability if the threshold is greater than $\frac{2\log(n)}{n}$, we complete the proof.}
\end{proof}

\section{Proof of Lemma~\ref{lma:crossedgesbound_GBM}}
\begin{lemma_nonum}
 Assume { $\theta_2 > 1$} and  set $E_T$ as in  Lemma~\ref{lma:eta_GBM}. Let $C$ be the set of cross-cluster edges in $G_r$. If $$\theta_1-\theta_2-2-\eta>t_1,$$ then with probability at least $1-o(1)$ we have $|C| = 0$. If $$t_1>\theta_1-\theta_2-2-\eta>0,$$
    then we have
    \begin{align*}
        |C| \leq \frac{\theta_2}{2}n^{1-\epsilon}(\log(n))^2
    \end{align*}
    with probability at least $1-o(1)$, where
    \begin{equation*}
        \epsilon = (\theta_1+\theta_2-2-\eta)\log(\frac{\theta_1+\theta_2-2-\eta}{2\theta_2})-(\theta_1-\theta_2-2-\eta)
    \end{equation*}
\end{lemma_nonum}

To prove the first half of this lemma, we need the next lemma from~\cite{galhotra2018connectivity}
which provides a bound on the triangle count of any edge
$(u,v)$ such that $\sigma(u)\neq \sigma(v)$.
{Note that when $\sigma(u)\neq\sigma(v)$ the distribution of the triangle count for $(u,v)$ is independent of $d_{uv}$ (see Lemma~\ref{lma:Binomials}). Hence, to simplify the notation, we  just use $T^{uv}$ to denote the triangle count when we condition on $\sigma(u)\neq \sigma(v)$.}

\begin{lemma}[Simplified Lemma 13 in~\cite{galhotra2018connectivity}]\label{lma:simpleLma13}
    Suppose the graph is generated from GBM$(n,\sigma,\frac{\theta_1\log(n)}{n},\frac{\theta_2\log(n)}{n})$. For each edge  $(u,v)$ such that $\sigma(u)\neq \sigma(v)$,  the associated triangle count  $T^{uv}$ is less than $(2\theta_2+t_1)\log(n)$ with probability at least $1-\frac{1}{n\log(n)^2}$.
\end{lemma}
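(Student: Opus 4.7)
The plan is to obtain the claimed upper tail bound directly from Lemma~\ref{lma:Binomials} via a Chernoff-type argument, exactly as in the proof of Lemma~\ref{lma:eta_GBM}, combined with the defining inequality for $t_1$ in \eqref{req:eta}.

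First, by Lemma~\ref{lma:Binomials}, conditioned on $\sigma(u)\neq\sigma(v)$ the triangle count is distributed as $T^{uv}\sim\mathrm{Bin}(n-2,\,p)$ with $p=2\theta_2\log(n)/n$, independently of $d_{uv}$. Its mean is $(n-2)p=(1-o(1))\,2\theta_2\log(n)$, so the event $\{T^{uv}\geq(2\theta_2+t_1)\log(n)\}$ is an upper-tail event whose excess over the mean is essentially $t_1\log(n)$. I would write the standard moment-generating-function bound
\[
\mathbb{P}(T^{uv}\geq k)\;\leq\;\inf_{\xi>0}\exp\bigl(-\xi k+(n-2)\log(1+p(e^{\xi}-1))\bigr)
\;\leq\;\inf_{\xi>0}\exp\bigl(-\xi k+(n-2)p(e^{\xi}-1)\bigr),
\]
using $1+x\leq e^{x}$, with $k=(2\theta_2+t_1)\log(n)$.

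Second, I would pick the near-optimal exponent $\xi=\log\bigl((2\theta_2+t_1)/(2\theta_2)\bigr)$, so that $(n-2)p(e^{\xi}-1)=(1-2/n)\,t_1\log(n)$. Substituting yields
\[
\mathbb{P}(T^{uv}\geq k)\;\leq\;\exp\!\Bigl(-\log(n)\bigl[(2\theta_2+t_1)\log\!\tfrac{2\theta_2+t_1}{2\theta_2}-t_1\bigr]+o(\log n)\Bigr).
\]
By the definition of $t_1$ in \eqref{req:eta}, the bracketed quantity is exactly $1$ at the infimum, giving a raw bound of order $1/n$.

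Third, to upgrade $1/n$ to $1/(n\log(n)^2)$ I would invoke the same simplification used for $\eta$ in Remark~\ref{rmk:eta_detail}: strictly speaking one defines $t_1$ as the infimum of $t$ such that
\[
(2\theta_2+t)\log\!\tfrac{2\theta_2+t}{2\theta_2}-t\;\geq\;1+\frac{2\log\log(n)}{\log(n)}+o(1/\log n),
\]
which is the same $t_1$ up to an $o(1)$ correction and is used in the statement for simplicity. With this refined $t_1$, the exponent in the Chernoff bound becomes
\[
-\log(n)-2\log\log(n)+o(1),
\]
and so $\mathbb{P}(T^{uv}\geq(2\theta_2+t_1)\log n)\leq \tfrac{1}{n\log(n)^2}$, which is the stated claim.

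The main obstacle is the bookkeeping for the extra $\log\log(n)$ slack: the nominal value of $t_1$ only delivers $1/n$, so one must justify that the $o(1)$ adjustment to $t_1$ (absorbed into the ``for simplicity'' convention as in Remark~\ref{rmk:eta_detail}) is harmless, and verify that the lower-order terms from $(n-2)/n$ and from the $1+x\leq e^{x}$ step are indeed $o(1)\cdot\log(n)$ and do not erode the gain from the $\log\log$ correction. Everything else is a direct Chernoff computation parallel to the one already carried out in the proof of Lemma~\ref{lma:eta_GBM}.
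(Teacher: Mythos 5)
Your proof is correct, and it is essentially the argument one would expect: the upper-tail Chernoff/MGF bound for the $\mathrm{Bin}(n-2,\,2\theta_2\log(n)/n)$ distribution from Lemma~\ref{lma:Binomials}, with the optimal tilt $\xi=\log\bigl(\tfrac{2\theta_2+t_1}{2\theta_2}\bigr)$, which mirrors the lower-tail computations the paper carries out explicitly in the proofs of Lemma~\ref{lma:eta_GBM} and Lemma~\ref{lma:lemmaR}. Note, however, that the paper itself does not prove this lemma at all --- it imports it verbatim as Lemma~13 of the cited reference --- so your write-up is a self-contained substitute rather than a variant of an in-paper argument. You also correctly isolated the one genuinely delicate point: with $t_1$ defined exactly as in \eqref{req:eta} the exponent evaluates to $1$ at the infimum and the raw bound is only $O(1/n)$, and the extra $\log(n)^{-2}$ factor requires the $\tfrac{2\log\log(n)}{\log(n)}$ slack in the definition of $t_1$, exactly the convention the paper adopts for $\eta$ in Remark~\ref{rmk:eta_detail}. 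Your handling of the lower-order terms (the $(n-2)/n$ factor, which only helps since it enters with a favorable sign, and the $1+x\leq e^{x}$ step) is sound, so there is no gap.
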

Now we are ready to prove Lemma~\ref{lma:crossedgesbound_GBM}
\begin{proof}

    The proof of the case $\theta_1-\theta_2-2-\eta>t_1$ follows directly from Lemma~\ref{lma:eta_GBM}, Lemma~\ref{lma:simpleLma13}, Lemma~\ref{lma:totedges} and Lemma~\ref{lma:Binomials}. Recall that for any edge $(u,v)$ such that $\sigma(u)\neq \sigma(v)$, $T^{uv}$ is a Binomial random variable with mean $2(1-\frac{2}{n})\theta_2\log(n)$. Note that $\theta_1-\theta_2-2-\eta>t_1$ is equivalent to $\theta_1+\theta_2-2-\eta>2\theta_2+t_1$. By Lemma~\ref{lma:simpleLma13} we know that $T^{uv}$ is less than $(2\theta_2+t_1)\log(n)$ with probability at least $1-\frac{1}{n\log(n)^2}$. This implies that $T^{uv}$ is larger than $nE_T = (\theta_1+\theta_2-2-\eta)\log(n)$ with probability at most $\frac{1}{n\log(n)^2}$. Then by Lemma~\ref{lma:totedges} we know that the number of cross-edges is at most $O(n\log(n))$ with probability at least $1-\frac{1}{n}$. { Applying the union bound over all cross-edges, we show that setting the threshold to $nE_T$ with $E_T$ as in  Lemma~\ref{lma:eta_GBM}, we will remove all cross-cluster edges if $\theta_1-\theta_2-2-\eta>t_1$.}

    Now we prove the result for the case $t_1>\theta_1-\theta_2-2-\eta>0$.
    First we compute the first moment of $|C|$. Let $\mathbf{A}^{(r)}$ be the adjacency matrix of $G_r$. By the edge removing policy in our algorithm, we have
    \begin{align*}
        &\mathbb{E} \left[ |C| \right]= \mathbb{E} \left[ \sum_{u<v,\sigma(u)\neq \sigma(v)}A_{uv}^{(r)} \right] = (\frac{n}{2})^2\mathbb{P}\left(A_{uv}^{(r)}=1\right)
    \end{align*}
    where the last equality is due to symmetry of the cross-edges. Then by Bayes' rule we have
    \begin{align*}
        \mathbb{P}\left(A_{uv}^{(r)}=1\right) &= \mathbb{P}\left(A_{uv}^{(r)}=1\big|A_{uv}=1\right)\mathbb{P}\left(A_{uv}=1\right)\\
        &+\mathbb{P}\left(A_{uv}^{(r)}=1\big|A_{uv}=0\right)\mathbb{P}\left(A_{uv}=0\right)\\
        & = \mathbb{P}\left(A_{uv}^{(r)}=1\big|A_{uv}=1\right)\mathbb{P}\left(A_{uv}=1\right),
    \end{align*}
    where the last equality follows the fact that with probability $1$, $A_{uv} = 0$ implies $A_{uv}^{(r)}=0$ since we do not add edges. It is easy to see that
    \begin{align*}
        \mathbb{P}\left(A_{uv}=1\right) = \frac{2\theta_2\log(n)}{n}
    \end{align*}
    from the generative model. For the other probability, based on Lemma~\ref{lma:Binomials} and our edge removing policy we have
    \begin{align*}
        \mathbb{P}\left(A_{uv}^{(r)}=1\big|A_{uv}=1\right) = \mathbb{P}\left(T^{uv}\leq nE_T\right)
    \end{align*}
    Hence so far we have
    \begin{align}\label{eq:C_moment1_eq}
        \mathbb{E}|C| = \frac{\theta_2n\log(n)}{2}\mathbb{P}\left(T^{uv}\leq nE_T\right)
    \end{align}
    In order to get the more explicit result, we can further upper bound $\mathbb{P}\left(T^{uv}\leq nE_T\right)$ by Chernoff bound as following. For simplicity we denote $q' = \frac{2\theta_2\log(n)}{n}$.
    \begin{align}
    \label{eq:C_moment1_ineq00}
        &\mathbb{P}\left(T^{uv} \leq nE_T\right) \leq \inf_{\xi>0}\exp(\xi nE_T)(1+q'(e^{-\xi}-1))^{n-2}\nonumber \\
        & = \inf_{\xi>0}\exp(\xi nE_T+(n-2)\log(1+q'(e^{-\xi}-1)))\nonumber\\
        & \leq \inf_{\xi>0}\exp(\xi nE_T+(n-2)\log(\exp(q'(e^{-\xi}-1))))\nonumber \\
        & = \inf_{\xi>0}\exp(\xi nE_T+(n-2)q'(e^{-\xi}-1))\nonumber \\
        & \leq \exp\Bigg(-\Bigg[(\theta_1+\theta_2-2-\eta)\log(\frac{\theta_1+\theta_2-2-\eta}{2\theta_2})\nonumber \\
        &-(\theta_1-\theta_2-2-\eta)+\frac{2(\theta_1-\theta_2-2-\eta)}{n}\Bigg]\log(n)\Bigg) \nonumber\\
        & \stackrel{(a)}\leq n^{-\epsilon}
    \end{align}
    Note that inequality (a) is due to our choice of $\epsilon$ and the assumption $(\theta_1-\theta_2-2-\eta)>0$. Using
    \eqref{eq:C_moment1_ineq00} we can upper bound \eqref{eq:C_moment1_eq} as following:
    \begin{align}\label{eq:C_moment1_ineq}
        \mathbb{E} \left[ |C| \right] \leq \frac{\theta_2}{2}n^{1-\epsilon}\log(n)
    \end{align}
    Then, applying  Markov inequality:
    \begin{align}
    \label{ufff}
        \mathbb{P}\left( |C| \geq \mathbb{E} \left[ |C| \right]\log(n)\right) \leq \frac{1}{\log(n)} = o(1)
       \end{align}
    and combining \eqref{ufff} with~\eqref{eq:C_moment1_ineq}, we can show that with probability at least $1-o(1)$, $|C| \leq \frac{\theta_2}{2}n^{1-\epsilon}(\log(n))^2$ which completes the proof.
\end{proof}

\section{Proof of Lemma~\ref{lma:lemmaR}}
\begin{lemma_nonum}
    Assume $\theta_1 \geq 2\theta_2$, $\theta_2 \geq 1$, and $2\theta_2+t_1 > \theta_1 + \theta_2 - 2 -\eta$. All the in-cluster edges with distance in feature space less than $R$ will not be removed in $G_r$, where
    \scriptsize
    \begin{align}
    \label{mattina}
       R = &\sup_{\min(\theta_1-\theta_2-t_1,2)>r>0} \Bigg\{ (2\theta_2+t_1)\log(\frac{2\theta_2+t_1}{\theta_1+\theta_2-r})\\
       &+(\theta_1+\theta_2-r-(2\theta_2+t_1)) > 1\Bigg\}
    \end{align}
    \normalsize
\end{lemma_nonum}
\begin{proof}
    From Lemma~\ref{lma:Binomials} we know that for an in-cluster edge $(u,v)$ with distance $x = \frac{\phi\log(n)}{n}$, the number of triangles covering such edge is
    \scriptsize
    \begin{align*}
        T^{uv}(x)\sim &Bin(\frac{n}{2}-2,(2\theta_1-\phi)\frac{\log(n)}{n})\\
        &+\mathbf{1}\{\phi\leq 2\theta_2\}Bin(\frac{n}{2},(2\theta_2-\phi)\frac{\log(n)}{n}).
    \end{align*}
    \normalsize
    Also, the probability that such edge being removed is monotonically {increasing} in $x$ from the similar argument in the proof of Lemma~\ref{lma:eta_GBM}. Then from assumption we know that we can only guarantee to keep all edges with distance $R<2\leq 2\theta_2$ with high probability for some $R$. In order to choose the best possible $R$,
    similarly to Lemma~\ref{lma:eta_GBM},
    we leverage again the tail bound of Poisson Binomial distribution. More specifically, let us denote $p = (2\theta_1-\phi)\frac{\log(n)}{n}$ and $q = (2\theta_2-\phi)\frac{\log(n)}{n}$. The probability that an in-cluster edge $(u,v)$ with $d_{uv} = x = \phi\frac{\log(n)}{n}$ will be removed is
    \scriptsize
    \begin{align*}
        &\mathbb{P}\left(T^{uv}(x) \leq (2\theta_2+t_1)\log(n)\right)\\
        & \leq \inf_{\xi>0}\exp(\xi(2\theta_2+t_1)\log(n)+(\frac{n}{2}-2)\log(1-p+pe^{-\xi}) \\
        &+ \frac{n}{2}\log(1-q+qe^{-\xi}))\\
        & \leq \inf_{\xi>0}\exp(\xi(2\theta_2+t_1)\log(n)+(\frac{n}{2}-2)p(e^{-\xi}-1) + \frac{n}{2}q(e^{-\xi}-1))\\
        & \stackrel{(a)}\leq \exp\bigg(-[(2\theta_2+t_1)\log(\frac{(2\theta_2+t_1)}{\theta_1+\theta_2-\phi})+(2\theta_2+t_1)\\
        &-(\theta_1+\theta_2-\phi)]\log(n)\\
        &+\frac{4\theta_1-2\phi}{\theta_1+\theta_2-\phi}[(\theta_1+\theta_2-\phi)-(2\theta_2+t_1)]\frac{\log(n)}{n}\bigg)\\
        & \stackrel{(b)}\leq \frac{1}{n\log(n)^2}
    \end{align*}
    \normalsize
    Note that for inequality (a), we choose $\xi = -\log(\frac{2\theta_2+t_1}{\theta_1+\theta_2-\phi})$. This is a valid choice for all $0< \phi < \theta_1-\theta_2-t_1$.
    Let $M = \min(\theta_1-\theta_2-t_1,2)$. Now we want to choose the best possible $\phi = R$ for the inequality (b) to hold:
    \scriptsize
    \begin{align*}
        &R = \sup_{M>\phi>0}\Bigg\{(2\theta_2+t_1)\log(\frac{(2\theta_2+t_1)}{\theta_1+\theta_2-\phi})+(2\theta_2+t_1)-(\theta_1+\theta_2-\phi)\\
        &-\frac{1}{n}\frac{4\theta_1-2\phi}{\theta_1+\theta_2-\phi}[(\theta_1+\theta_2-\phi)-(2\theta_2+t_1)]-\frac{2\log(\log(n))}{\log(n)}\geq 1\Bigg\}.
    \end{align*}
    \normalsize
    Note that the last two terms in the above expression are $o(1)$. Hence, for $n$ large enough we can ignore them, from which  \eqref{mattina} follows immediately. Next, note that our assumption $2\theta_2+t_1 > \theta_1+\theta_2-2-\eta$ already implies that $R$ given in \eqref{mattina} is such that $R<2$. See Figure~\ref{fig:explain3} for the illustration. Hence, by using Lemma~\ref{lma:totedges}, which states that that the number of in-cluster edges is $O(n\log(n))$ with probability at least $1-\frac{1}{n}$, by recalling from Section \ref{sec:Proof{lma:eta_GBM}} that the probability  of removing an edge with distance smaller than $R<2$ is at most $\frac{1}{n(\log(n))^2}$, and by applying union bound over all in-cluster edges, it follows immediately that  with probability at least $1-o(1)$ all in-cluster edges with distance less then $R$ will not be removed. This completes the proof.
\end{proof}

\section{Proof of Lemma~\ref{lma:Poissonapprox}}
\begin{lemma_nonum}[Modification of Theorem 8.1 in~\cite{han2008connectivity}]
 Given a random geometric graph, $RGG(n,\tau)$, with $2\tau<1$,  let $\mathcal{C}_{n,\tau}$ be the probability mass  function of the number of  disjoint components $-1$ of $RGG(n,\tau)$ and let $\Pi_\lambda$ denote a Poisson distribution with parameter $\lambda$. Let $d_{TV}(\mu,\nu)\triangleq \frac{1}{2}\sum_{x=0}^{\infty}|\mu(x)-\nu(x)|$ be the total variation of the two  probability mass functions $\mu$ and $\nu$ on $\mathbb{N}$. Then we have
    \begin{align*}
        d_{TV}(\mathcal{C}_{n,\tau},\Pi_{\lambda_n(\tau)}) \leq B_n(\tau),
    \end{align*}
    where
    \scriptsize
    \begin{align*}
        & \lambda_n(\tau) = n(1-\tau)^{n},\;B_n(\tau) = n(1-\tau)^{n} - (n-1)(1-\frac{\tau}{1-\tau})^{n}
    \end{align*}
    \normalsize
\end{lemma_nonum}

\begin{remark}
    This is a slight modification of Theorem 8.1 in~\cite{han2008connectivity}. In fact,  in~\cite{han2008connectivity} the authors assume that the nodes are distributed on an unit length interval rather than on a unit length circle as in our setting. In the following we show how the modification can be done for our case. Note that although we are aware that the authors of~\cite{han2008connectivity} pointed out that the result for the unit length circle had been already established in~\cite{maehara1990intersection}, we have no access to that paper and hence we prove it here again for completeness.
\end{remark}
\begin{proof}
    Let us first assume that we are in the same setting as~\cite{han2008connectivity}, where the nodes are uniformly distributed on the unit length interval. Let $X_1,...,X_n$ denote the ordered node positions, i.e. $0\leq X_1 \leq X_2 \leq ... \leq X_n \leq 1$. Further, let $L_i$ be the spacing, i.e. $L_i = X_i - X_{i-1},\;\forall i=1,...,n+1$ where we let $X_0 = 0$ and $X_{n+1} = 1$. It is clear that $\sum_{i=1}^{n+1}L_i = 1$. Next, let
    $\chi_i(\tau) = \mathbf{1}\{L_i>\tau\},\forall i=2,...,n$. It is clear that $\sum_{i = 2}^{n}\chi_i(\tau)+1$ is exactly the number of disjoint components and thus $\sum_{i = 2}^{n}\chi_i(\tau)$ has probability mass function $\mathcal{C}_{n,\tau}$. Theorem 8.1 in~\cite{han2008connectivity} shows that $$d_{TV}(\mathcal{C}_{n,\tau},\Pi_{\lambda_{n}'(\tau)})\leq B_{n}'(\tau),$$
    \begin{align*}
        & \lambda_n'(\tau) = (n-1)(1-\tau)^{n},\\
        &B_n'(\tau) = (n-1)(1-\tau)^{n} - (n-2)(1-\frac{\tau}{1-\tau})^{n}
    \end{align*}

    Now we turn back to our setting, where the nodes are uniformly distributed on the unit length circle. We observe that by unraveling our circle at an arbitrary place, we obtain the unit length interval and the nodes can be still ordered as $X_1,...,X_n$. Similarly we can define the spacing to be $L_i = X_i - X_{i-1},\;\forall i=1,...,n+1$. However, beside defining $\chi_i(\tau) = \mathbf{1}\{L_i>\tau\},\forall i=2,...,n$, we also define $\chi_1 = \mathbf{1}\{L_1+L_{n+1}>\tau\}$. Note that $L_1+L_{n+1}$ is exactly the distance of $X_1$ to $X_n$ in the circle. Also note that by symmetry, $\chi_1$ is identically distributed as $\chi_2(\tau),...,\chi_n(\tau)$. Thus the number of disjoint components in our case would be $\sum_{i = 1}^{n}\chi_i(\tau)+1$ and hence we sum one more indicator random variable $\chi_1(\tau)$ compare to the case in~\cite{han2008connectivity}. Hence, following the same proof of Theorem 8.1 in~\cite{han2008connectivity} we have Lemma~\ref{lma:Poissonapprox}.
\end{proof}

\section{Proof of Theorem~\ref{thm:QboundAlgo2}}
\begin{theorem*}
    Assume  $\theta_1 \geq 2\theta_2 $,  $\theta_2 \geq 1$, and $2\theta_2+t_1 > \theta_1 + \theta_2 - 2 -\eta$. With probability at least $1-o(1)$, Algorithm~\ref{alg:alter_ALonGBM} exactly recovers the clusters with query complexity at most $$\frac{3}{2}n^{1-R/2}+2,$$ where
    \scriptsize
    \begin{align*}
        R = &\sup_{\min(\theta_1-\theta_2-t_1,2)>r>0} \Bigg\{ (2\theta_2+t_1)\log(\frac{2\theta_2+t_1}{\theta_1+\theta_2-r})\\
        &+(\theta_1+\theta_2-r-(2\theta_2+t_1)) > 1\Bigg\}.
    \end{align*}
    \normalsize
\end{theorem*}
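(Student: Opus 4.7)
The plan is to establish that after Phase~1 of Algorithm~\ref{alg:alter_ALonGBM}, (a) every cross-cluster edge is removed with high probability, so that each connected component of $G_r$ lies entirely inside a single cluster, and (b) the subgraph of $G_r$ induced on each cluster contains an $RGG(n/2,R\log(n)/n)$ as a subgraph. From (a) and (b), the total number of connected components of $G_r$ is at most twice the number of components of $RGG(n/2,R\log(n)/n)$, and since Phase~2 queries exactly one node per component and assigns the correct label to the whole component by (a), the query complexity reduces to a counting problem on the RGG, which I then handle via the Poisson approximation of Lemma~\ref{lma:Poissonapprox} combined with the Poisson tail bound of Lemma~\ref{lma:Poisson_tail}.

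For step~(a), by Lemma~\ref{lma:Binomials} every cross-cluster edge $(u,v)$ has $T^{uv}\sim\mathrm{Bin}(n-2,2\theta_2\log(n)/n)$. The definition of $t_1$ in \eqref{req:eta} together with a Chernoff bound (this is precisely Lemma~\ref{lma:simpleLma13}) gives $\Pr(T^{uv}>(2\theta_2+t_1)\log(n))\leq 1/(n\log^2 n)$. Combining this with the $O(n\log n)$ bound on the total number of edges from Lemma~\ref{lma:totedges} and taking a union bound shows that every cross-cluster edge is discarded in Phase~1 w.h.p. For step~(b), Lemma~\ref{lma:lemmaR} asserts that every intra-cluster edge with distance at most $R\log(n)/n$ survives Phase~1 w.h.p., and by Remark~\ref{gbmandrgg} each cluster is itself an $RGG(n/2,\theta_1\log(n)/n)$; thus the subgraph of $G_r$ on each cluster contains an $RGG(n/2,R\log(n)/n)$, and its component count is upper bounded by that of the latter.

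Write $\tau_n=R\log(n)/n$. The Poisson parameter of Lemma~\ref{lma:Poissonapprox} applied to $RGG(n/2,\tau_n)$ is
\[
\lambda = \frac{n}{2}(1-\tau_n)^{n/2} = \frac{1}{2}n^{1-R/2}(1+o(1)),
\]
while a Taylor expansion of the formula for $B_{n/2}(\tau_n)$ shows $B_{n/2}(\tau_n)=o(1)$, so the component count minus one is well approximated in total variation by a Poisson random variable of mean $\lambda$. Since Lemma~\ref{lma:lemmaR} forces $R<2$, we have $\lambda\to\infty$, and applying Lemma~\ref{lma:Poisson_tail} with $y=\tfrac{1}{4}n^{1-R/2}$ gives that the component count exceeds $1+\tfrac{3}{4}n^{1-R/2}$ with probability $o(1)$. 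Summing over the two clusters yields at most $2+\tfrac{3}{2}n^{1-R/2}$ components in $G_r$ with probability $1-o(1)$, which is exactly the claimed query bound; a final union bound collects the $o(1)$ error terms from step~(a), Lemma~\ref{lma:lemmaR}, Lemma~\ref{lma:Poissonapprox}, and Lemma~\ref{lma:Poisson_tail}.

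The main technical obstacle is the last paragraph: one must carefully verify that $B_{n/2}(\tau_n)=o(1)$ under the scaling $\tau_n=R\log(n)/n$, which requires comparing $(1-\tau_n)^{n/2}$ with $(1-\tau_n/(1-\tau_n))^{n/2}$ at the right order, and one must choose the deviation $y$ in Lemma~\ref{lma:Poisson_tail} optimally so as to simultaneously ensure $y^2/(2(\lambda+y))\to\infty$ and obtain the tight constant $3/2$ in the final query bound. The choice $y=\tfrac14 n^{1-R/2}$ is what makes both conditions fit together.
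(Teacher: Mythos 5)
Your proposal is correct and follows essentially the same route as the paper's proof: remove all cross-cluster edges via Lemma~\ref{lma:simpleLma13} and a union bound, dominate the component count of $G_r$ by twice that of $RGG(n/2,R\log(n)/n)$ using Lemma~\ref{lma:lemmaR}, then combine the Stein--Chen approximation of Lemma~\ref{lma:Poissonapprox} (verifying $B_{n/2}(\tau_n)=o(1)$) with the Poisson tail bound of Lemma~\ref{lma:Poisson_tail}. The only cosmetic difference is your deviation choice $y=\tfrac14 n^{1-R/2}$ versus the paper's $y=\lambda/2$; both yield the same $\tfrac32 n^{1-R/2}+2$ bound since $\lambda\leq\tfrac12 n^{1-R/2}$ and $R<2$ guarantees $\lambda\to\infty$.
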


\begin{proof}
    From Lemma~\ref{lma:simpleLma13}, we know that with probability at least $1-o(1)$ we remove all cross-cluster edges in \textbf{Phase 1}. Hence in the following, we will focus on bounding the query complexity of our Algorithm~\ref{alg:alter_ALonGBM}.

    Let $Z$ be the number of disjoint components in $G_r$ and let $C_{n/2}(R\frac{\log(n)}{n})+1$ be the number of disjoint components in $RGG(\frac{n}{2},R\frac{\log(n)}{n})$, where $R$ is chosen according to Lemma~\ref{lma:lemmaR}. Note that $C_n(\tau)\sim \mathcal{C}_{n,\tau}$. It is obvious that for all $t$,
    \begin{align}\label{eq:Zbound1}
        \mathbb{P}\left(Z < t\right) \leq \mathbb{P}\left(2(C_{n/2}(R\frac{\log(n)}{n})+1) < t\right)
    \end{align}
    due to our choice of $R$. Then by Lemma~\ref{lma:Poissonapprox} we have
    \begin{equation*}
        d_{TV}(\mathcal{C}_{\tilde{n},\tau},\Pi_{\lambda_{\tilde{n}}(\tau)}) \leq B_{\tilde{n}}(\tau),\;\tilde{n} = \frac{n}{2},\;\tau = R\frac{\log(n)}{n}
    \end{equation*}
    Note that
    \scriptsize
    \begin{align*}
        & B_{\tilde{n}}(\tau) = \tilde{n}(1-\tau)^{\tilde{n}} - (\tilde{n}-1)(1-\frac{\tau}{1-\tau})^{\tilde{n}} \\
        &= \tilde{n}\left[(1-\tau)^{\tilde{n}}-(1-\frac{\tau}{1-\tau})^{\tilde{n}}\right]+(1-\frac{\tau}{1-\tau})^{\tilde{n}}\\
        & \stackrel{(a)}= \tilde{n}(-\tau+\frac{\tau}{1-\tau})\tau^{\tilde{n}-1}\left[1+\frac{1}{1-\tau}+(\frac{1}{1-\tau})^2+\cdots +(\frac{1}{1-\tau})^{\tilde{n}-1}\right]\\
        &+(1-\frac{\tau}{1-\tau})^{\tilde{n}}\\
        & = \tilde{n}\tau^{\tilde{n}+1}\left[\frac{1}{1-\tau}+(\frac{1}{1-\tau})^2+\cdots +(\frac{1}{1-\tau})^{\tilde{n}}\right]+(1-\frac{\tau}{1-\tau})^{\tilde{n}}\\
        & = \tilde{n}\tau^{\tilde{n}+1}\frac{1-(1-\tau)^{-\tilde{n}}}{1-\tau-1}+(1-\frac{\tau}{1-\tau})^{\tilde{n}}\\
        & = \tilde{n}\tau^{\tilde{n}}[(1-\tau)^{-\tilde{n}}-1]+(1-\frac{\tau}{1-\tau})^{\tilde{n}}\\
        & \stackrel{(b)}\leq \tilde{n}\tau^{\tilde{n}}\exp(\tilde{n}\tau)+(1-\frac{\tau}{1-\tau})^{\tilde{n}}
    \end{align*}
    \normalsize
    where for the equality (a) we use the simple algebra fact $x^n-y^n = (x-y)(x^{n-1}+x^{n-2}y+...+y^{n-1})$. The inequality (b) follows from the elementary fact $1+x\leq \exp(x)\;\forall x\in \mathbb{R}$. The first term can be further simplify as
    \scriptsize
    \begin{align*}
        & \tilde{n}\tau^{\tilde{n}}\exp(\tilde{n}\tau) = \exp(\frac{R\log(n)}{2} + \log(\frac{n}{2}) + \frac{n}{2}\log(\frac{R\log(n)}{n}))\\
        & = \exp(\frac{R\log(n)}{2} + \log(\frac{n}{2}) + \frac{n}{2}\log(R\log(n))-\frac{n}{2}\log(n)) = o(1)
    \end{align*}
    \normalsize
    The second term can be bounded as
    \begin{align*}
        & (1-\frac{\tau}{1-\tau})^{\tilde{n}} \leq \exp(-\tilde{n}\frac{\tau}{1-\tau}) = \exp(-\frac{R\log(n)}{2(1-\tau)}) = o(1)
    \end{align*}
    due to the fact $1-\tau = 1-o(1)$. Together we have $B_{\tilde{n}}(\tau) = o(1)$. Thus we have
    \begin{align}\label{eq:Zbound2}
        &\mathbb{P}\left(2(C_{n/2}(R\frac{\log(n)}{n})+1) < t\right) \nonumber \\
        &\leq \mathbb{P}\left(2(\Pi_{\lambda_{\tilde{n}}(R\frac{\log(n)}{n})}+1) < t\right) + o(1)\;\forall t
    \end{align}
    Then we can apply Lemma~\ref{lma:Poisson_tail} to bound the tail probability of Poisson. Choose $y = \frac{1}{2}\lambda$ we have
    \begin{align}\label{eq:Postail1}
        \mathbb{P}\left(\Pi_\lambda \geq \lambda + y \right) \leq \exp(-\frac{y^2}{2(\lambda+y)}) = \exp(-\frac{\lambda}{12})
    \end{align}
    Moreover we have
    \begin{align*}
        \lambda_{\tilde{n}}(R\frac{\log(n)}{n}) = \tilde{n}(1-\frac{R\log(n)}{n})^{\tilde{n}}
    \end{align*}
    It is easy to upper and lower bound this value as following
    \begin{align}
        &\tilde{n}(1-\frac{R\log(n)}{n})^{\tilde{n}} \leq \frac{n}{2}\exp(-\frac{R\log(n)}{2}) = \frac{n^{1-R/2}}{2}\label{eq:Posmean1}\\
        &\tilde{n}(1-\frac{R\log(n)}{n})^{\tilde{n}} \geq \frac{n}{2}\exp(-\tilde{n}\frac{\tau}{1-\tau}) \nonumber\\
        &= \frac{n}{2}\exp(-(1+o(1))\frac{R\log(n)}{2}) = \frac{n^{(1-(1+o(1))R/2)}}{2}\label{eq:Posmean2}
    \end{align}
    where we use the elementary inequality $1-x\geq \exp(1-\frac{x}{1-x})\;\forall x<1$ and the fact that $1-\tau = 1-o(1)$. By our choice of $R$ we know that $1-R/2>0$. Thus plugging  the lower bound~\eqref{eq:Posmean2} into~\eqref{eq:Postail1} we get $\exp(-\frac{\lambda}{12}) = o(1)$ by choosing $\lambda = \lambda_{\tilde{n}}(R\frac{\log(n)}{n})$. This means that with probability at least $1-o(1)$ we have
    \begin{align*}
        \Pi_{\lambda_{\tilde{n}}(R\frac{\log(n)}{n})}\leq \frac{3}{2}\lambda_{\tilde{n}}(R\frac{\log(n)}{n}) \leq \frac{3}{4}n^{1-R/2}
    \end{align*}
Combine this with~\eqref{eq:Zbound1} and ~\eqref{eq:Zbound2} we have shown that with probability at least $1-o(1)$, the number of disjoint components in $G_r$ is at most $$\frac{3}{2}n^{1-R/2}+2,$$
and this completes the proof.
\end{proof}

\section{The $S^2$ algorithm}
Here we include the original $S^2$ algorithm in~\cite{dasarathy2015s2} for completeness. Here we use $f(x)$ to denote the label of node $x$.
\begin{algorithm2e}[htb]
\caption{$S^2$}\label{alg:S2}
\SetAlgoLined
\DontPrintSemicolon
\SetKwInOut{Input}{Input}
\SetKwRepeat{Do}{do}{while}
\SetKwInOut{Output}{Output}
\SetKwInput{Mainalgo}{Main Algorithm}
  \Input{Graph $G$, query complexity budget $\mathcal{Q}(\delta)$
  }
  \Output{A partition of $V$
  }
    \Mainalgo{
    $L\leftarrow\emptyset$\;
    \While{1}{
    $x\leftarrow $ Uniformly at random pick an unlabeled node.\;
    \Do{$x\leftarrow MSSP(G,L)$ exists}{
        Add $(x,f(x))$ to $L$\;
        Remove all hyperedges containing nodes with different label from $G$.\;
        \If{more than $\mathcal{Q}(\delta)$ queries are used}{
            \textbf{Return} the remaining connected components of $G$ or \emph{LabelCompletion}$(G,L)$
        }
    }
    }
  }
\end{algorithm2e}

\begin{algorithm2e}[htb]
\caption{MSSP}\label{alg:MMSP}
\SetAlgoLined
\DontPrintSemicolon
\SetKwInOut{Input}{Input}
\SetKwRepeat{Do}{do}{while}
\SetKwInOut{Output}{Output}
\SetKwInput{Mainalgo}{Main Algorithm}
  \Input{Graph $G$, label list $L$
  }
  \Output{The midpoint of shortest-shortest path
  }
    \Mainalgo{\;
        \For{each $v,u\in L$ such that $u,v$ has different label}{
        $P_{v,u}\leftarrow $ shortest path between $v,u$ in $G$.\;
        $l_{u,v}\leftarrow $ length of $P_{u,v}$.($=\infty$ if doesn't exist)\;
        }
        $(v^*,u^*) = \arg\min l_{u,v}$\;
        \uIf{$(v^*,u^*)$ exists and $l_{v^*,u^*}\geq2$ }{
            Return the midpoint of $P_{v^*,u^*}$.\;
        }
        \Else{
            Return $\emptyset$\; 
        }
    }
\end{algorithm2e}

Note that the subroutine \emph{LabelCompletion}$(G,L)$ is described in~\cite{dasarathy2015s2} which will perform a certain kind of label propagation algorithm. This is not needed if the query budget is greater then the requirement stated in the theorem of $S^2$. The authors of~\cite{chien2019hs} extend the $S^2$ algorithms to hypergraph setting and more challenging and practical noisy query model.  

\section{Additional experimental results}
\begin{figure}[!h]
  \centering
    \subfloat[]{\includegraphics[width=0.9\linewidth]{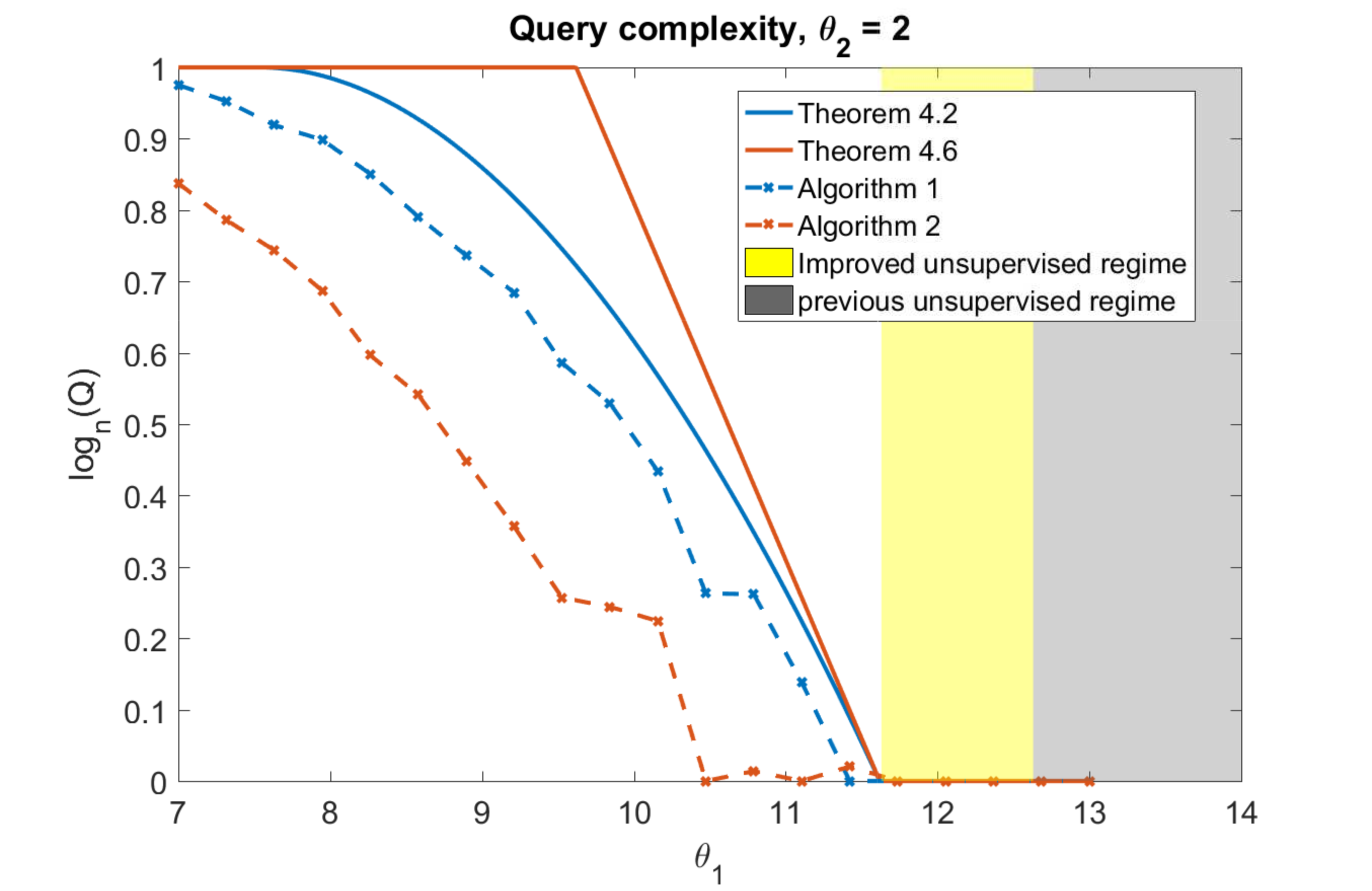}}\\
    \subfloat[]{\includegraphics[width=0.9\linewidth]{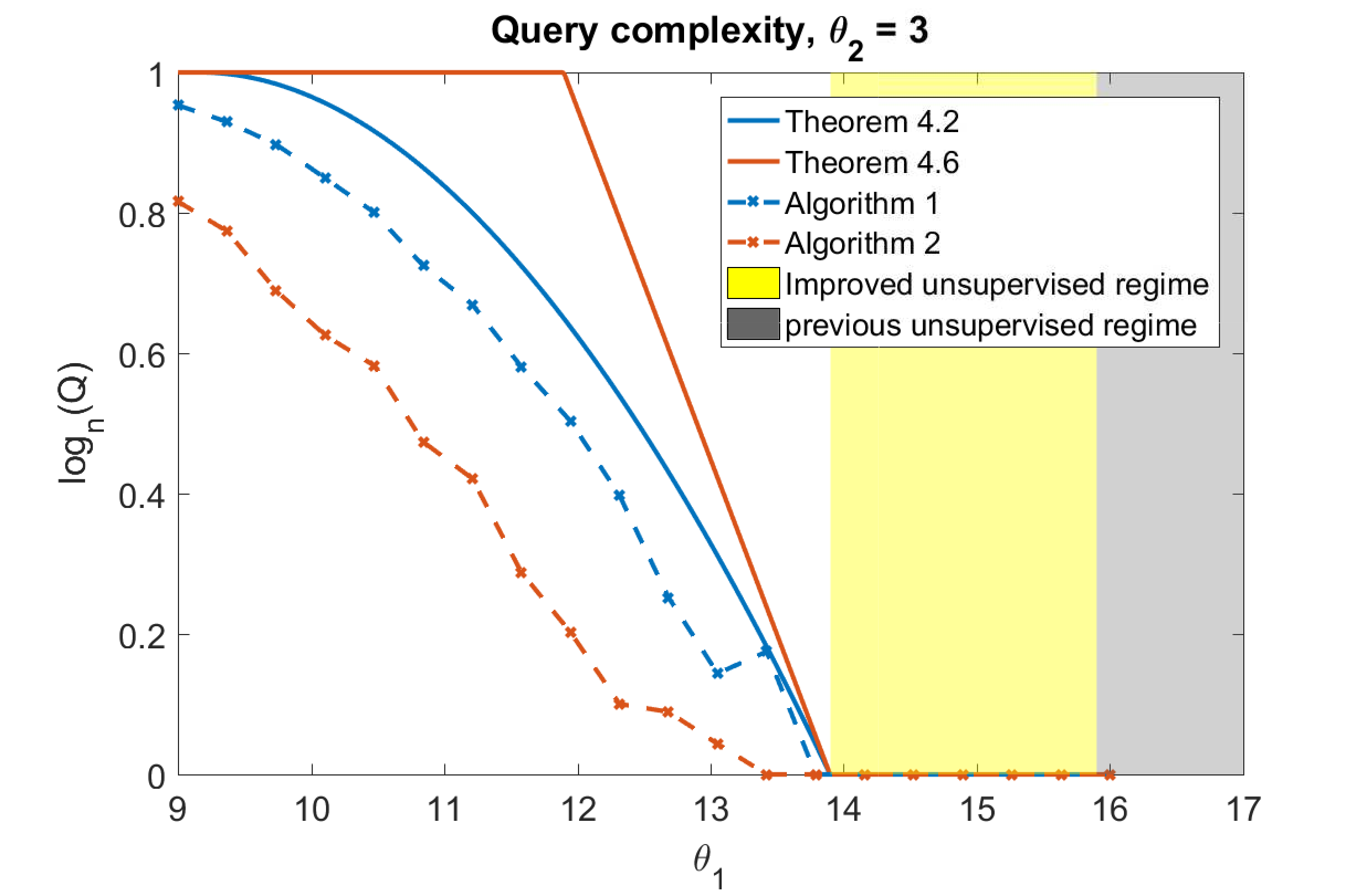}}
  \caption{Additional simulations on the other choice of $\theta_2$.}
\end{figure}

\end{document}